\theoremstyle{plain}
\numberwithin{equation}{section}
\newtheorem{theorem}{Theorem}[section]
\newtheorem{assumption}[theorem]{Assumption}
\newtheorem{definition}[theorem]{Definition}
\newtheorem{notation}[theorem]{Notation}
\newtheorem{example}[theorem]{Example}
\theoremstyle{remark}
\newtheorem{remark}[theorem]{Remark}
\numberwithin{equation}{section}
\newcommand{\bR}{{\mathbb R}}
\newcommand{\cR}{{\mathcal R}}
\newcommand{\cH}{{\mathcal H}}
\newcommand{\cC}{{\mathcal C}}
\newcommand{\cI}{{\mathcal I}}
\newcommand{\cB}{{\mathcal B}}
\newcommand{\cE}{{\mathcal E}}
\newcommand{\ket}[1]{\left\vert #1\right\rangle}
\newcommand{\bra}[1]{\left\langle #1\right\vert}
\def\idty{{\mathchoice {\mathrm{1\mskip-4mu l}} {\mathrm{1\mskip-4mu l}} %
{\mathrm{1\mskip-4.5mu l}} {\mathrm{1\mskip-5mu l}}}}
\newcommand{\Tr}{\mathrm{Tr}}
\newcommand{\supp}{\mathop{\mathrm{supp}}}
\newcommand{\be}{\begin{equation}}
\newcommand{\ee}{\end{equation}}
\newcommand{\bea}{\begin{eqnarray}}
\newcommand{\eea}{\end{eqnarray}}
\newcommand{\beann}{\begin{eqnarray*}}
\newcommand{\eeann}{\end{eqnarray*}}
\begin{document}

\title{Measure of genuine coherence based of quasi-relative entropy}
\author{Anna Vershynina}
\affil{\small{Department of Mathematics, Philip Guthrie Hoffman Hall, University of Houston, 
3551 Cullen Blvd., Houston, TX 77204-3008, USA}}
\renewcommand\Authands{ and }
\renewcommand\Affilfont{\itshape\small}

\date{\today}

\maketitle

\begin{abstract} 
We present a genuine coherence measure based on a quasi-relative entropy as a difference between quasi-entropies of the dephased and the original states. The measure satisfies non-negativity and monotonicity under genuine incoherent operations (GIO). It is strongly monotone under GIO in two- and three-dimensions, or for pure states in any dimension, making it a genuine coherence monotone. We provide a bound on the error term in the monotonicity relation under GIO in terms of the trace distance between the original and the dephased states. Moreover, the lower bound on the coherence measure can also be calculated in terms of this trace distance.
\end{abstract}
\begin{multicols}{2}

\section{Introduction}
Quantum coherence is a fundamental property of quantum systems, describing the existence of quantum interference. It is widely used in thermodynamics \cite{A14, C15, L15}, transport theory \cite{RM09, WM13}, and quantum optics \cite{G63, SZ97}, among few applications. Recently, problems involving coherence included quantification of coherence \cite{BC14, PC16, RPL16, R16, SX15, YZ16}, distribution \cite{RPJ16}, entanglement \cite{CH16, SS15}, operational resource theory \cite{CG16, CH16, DBG15, WY16},  correlations \cite{HH18, MY16, TK16}, with only a few references mentioned in each. See \cite{SAP17} for a more detailed review.

As a golden standard it is taken that any ``good" coherence measure should satisfy four criteria presented in \cite{BC14}: vanishing on incoherent states; monotonicity under  incoherent operations; strong monotonicity under incoherent operations, and convexity. Alternatively, the last two properties can be substituted by an additivity for subspace independent states, which was shown in \cite{YZ16}. 

 A number of ways has been proposed as a coherence measure, but only a few satisfy all necessary criteria \cite{BC14, ZY18, Zetal17}. A broad class of coherence measure are defined as the minimal distance $D$ to the set of incoherent states $\cI$, as
 $$\cC_D(\rho)=\min_{\delta\in \cI}D(\rho, \delta). $$
In \cite{BC14}, it was shown that coherence vanishes on incoherent states when the distance vanishes only on identical states; the measure is monotone when the distance is contractive under quantum channels; and it is convex when the distance is jointly convex. Strong monotonicity property is more challenging to pinpoint. Measures that satisfy the strong monotonicity that have been introduced up to date, are based on $l_1$, relative entropy, Tsallis entropy, and real symmetric concave functions on a probability simplex. 

Another approach to generate physically relevant coherence measures is to consider different incoherent operations. The largest class of incoherent operations is called maximally incoherent (MIO), and it consists of all completely-positive trace-preserving (CPTP) maps that preserve the set of incoherent states. The smaller set, called incoherent operations (IO)  \cite{BC14}, has Krauss operators that each preserve the set of incoherent states (see Definition \ref{def:IO}). A smaller set consists of strictly incoherent operations (SIO) \cite{WY16, YMGGV16}, which are the result of action on a primary and ancillary systems that do not generate coherence on a primary system, see Definition \ref{def:SIO}. And the last class of operations that is discussed in this paper, is called genuine incoherent operations (GIO) \cite{DS16}, which act trivially on incoherent states, see Definition \ref{def:GIO}. See \cite{CG16-2} for a larger list of incoherent operations, and their comparison. For these types of incoherent operations one may look at similar properties as the ones presented in \cite{BC14}. Restricted to GIO, one would obtain a measure of genuine coherence when it is non-negative and monotone, or a coherence monotone when it is also strongly monotone under GIO.

In \cite{DS16}, the following genuine coherence measure was proposed:
$$\cC_D(\rho)=D(\rho\|\Delta(\rho))\ , $$
for a distance $D$, and $\Delta(\rho)$ being the dephased state in a pre-fixed basis, see   Notation \ref{notation}. It was shown that this is a genuine coherence measure if the distance is contractive under unital operations. If fact, the monotonicity holds not only for GIO maps but for dephasing-covariant incoherent operations (DIO) as well (the ones that commute with the dephasing operator).

Here we propose another genuine coherence measure based on a  quasi-relative entropy:
$$\cC_f(\rho)=S_f(\Delta(\rho))-S_f(\rho)\ , $$
here $S_f(\rho)$ is a quasi entropy, which could be defined in two ways, one of which is $S_f(\rho)=-S_f(\rho\|| I)$. The motivation for this definition comes from the relative entropy coherence. It was shown \cite{BC14} that for a relative entropy $S(\cdot\|\cdot)$, there is a closed expression of a distance-based coherence measure:
$$\min_{\delta\in \cI}S(\rho\|\delta)=S(\rho\|\Delta(\rho))=S(\Delta(\rho))-S(\rho)\ . $$

In general, for quasi-relative entropies neither of these equalities will hold. This can be seen for Tsallis relative entropy, which is a particular case of a quasi-relative entropy. The closest incoherent state is given in \cite{R16}, and it is not a dephased state $\Delta(\rho)$. The second equality does not hold either in general.

We show that quasi-relative entropy coherence, which we call $f$-coherence, is unique for pure states, non-negative, zero if and only if a state is incoherent, and monotone under GIO maps. Moreover, we give a lower bound on this coherence in terms of a trace distance between a state and its dephased state, we provide an if and only if condition on a GIO map that saturates the monotonicity relation, and bound the error term in the monotonicity relation. Additionally, we investigate when the $f$-coherence would be monotone under a larger class of SIO maps. 

We show that $f$-coherence saturates strong monotonicity under GIO maps in two- and three-dimensions, and it satisfies the strong monotonicity under GIO maps in any dimensions for pure states.

\section{Preliminaries}

\subsection{Coherence}

Let $\cH$ be a $d$-dimensional Hilbert space. Let us fix a basis $\cE=\{\ket{j}\}_{j=1}^d$ of vectors in $\cH$.
\begin{definition} A state $\delta$ is called {\it incoherent} if it can be represented as follows
$$\delta=\sum_j \delta_j\ket{j}\bra{j}\ . $$
\end{definition}

\begin{notation}\label{notation}
Denote the set of {\bf incoherent states} for a fixed basis $\cE=\{\ket{j}\}_j$ as $$\cI=\{\rho=\sum_jp_j\ket{j}\bra{j}\}\ .$$
A {\bf dephasing} operation in $\cE$ basis is the following map:
$$\Delta(\rho)=\sum_j \bra{j}\rho\bra{j} \ket{j}\bra{j}\ . $$
\end{notation}

\begin{definition}\label{def:IO} A CPTP map $\Phi$ with the following Kraus operators
$$\Phi(\rho)=\sum_n K_n \rho K_n^*\ , $$
is called {\bf the incoherent operation (IO)} or incoherent CPTP (ICPTP), when the Kraus operators satisfy
$$K_n \cI K_n^*\subset \cI,\ \text{for all }n \ , $$
besides the regular completeness relation $\sum_n K_n^*K_n=\idty$.
\end{definition}

Any reasonable measure of coherence $\cC(\rho)$ should satisfy the following conditions
\begin{itemize}
\item (C1) $\cC(\rho)=0$ if and only if $\rho\in\cI$;
\item (C2) Non-selective monotonicity under IO maps (monotonicity)
$$\cC(\rho)\geq \cC(\Phi(\rho))\ ; $$
\item (C3) Selective monotonicity under IO maps (strong monotonicity)
$$\cC(\rho)\geq \sum_n p_n \cC(\rho_n)\ , $$
where $p_n$ and $\rho_n$ are the outcomes and post-measurement states
$$\rho_n=\frac{K_n\rho K_n^*}{p_n},\ \ p_n=\Tr K_n\rho K_n^*\ . $$
\item (C4) Convexity, 
$$\sum_n p_n \cC(\rho_n)\geq \cC\left(\sum_n p_n\rho_n\right)\ , $$
for any sets of states $\{\rho_n\}$ and any probability distribution $\{p_n\}$.
\end{itemize}


These properties are parallel with the entanglement measure theory, where the average entanglement is not increased under the local operations and classical communication (LOCC). Notice that coherence measures that satisfy conditions (C3) and (C4) also satisfies condition (C2). 

In \cite{DS16} a class of incoherence operations was defined, called {genuinely incoherent operations (GIO)} as quantum operations that preserve all incoherent states.

\begin{definition}\label{def:GIO} An IO map $\Lambda$ is called a {\bf genuinely incoherent operation (GIO)} is for any incoherent state $\delta\in\cI$,
$$\Lambda(\delta)=\delta\ . $$
\end{definition}

An operation $\Lambda$ is GIO if and only if all Kraus representations of $\Lambda$ has all Kraus operators diagonal in a pre-fixed basis  \cite{DS16}. 

Conditions (C2), (C3) and (C4) can be restricted to GIO maps to obtain different classes of coherence measures.
\begin{definition}
In this case, a {\bf measure of genuine coherence} satisfies at least (G1) and (G2). And if a coherence measure fulfills conditions (G1), (G2), (G3) it is called {\bf genuine coherence monotone}. 
\end{definition}

A larger class of IO maps was defined in \cite{WY16, YMGGV16}.
\begin{definition}\label{def:SIO}
An IO map $\Lambda$ is called {\bf strictly incoherent operations (SIO)} if its Kraus representation operator commute with dephasing, i.e. for $\Lambda(\rho)=\sum_j K_j\rho K_j^*$, we have for any $j$,
$$K_j\Delta(\rho)K_j^*=\Delta(K_j\rho K_j^*) \ . $$
\end{definition}

Since Kraus operators of GIO maps are diagonal in $\cE$ basis, any GIO map is SIO as well, i.e. GIO $\subset$ SIO, \cite{DS16}. 

One may consider an additional property, closely related to the entanglement theory:
\begin{itemize}
\item (C5) Uniqueness for pure states: for any pure state $\ket{\psi}$ coherence takes the form:
$$\cC(\psi)=S(\Delta(\psi))\ , $$
where $S$ is the von Neumann entropy and $\Delta$ is the dephasing operation defined as
$$\Delta(\rho)=\sum_j \bra{j}\rho\ket{j}\ket{j}\bra{j}\ . $$
\end{itemize}





\subsection{Quasi-relative entropy}
Quantum quasi-relative entropy was introduced by Petz \cite{P85, P86} as a quantum generalization of a classical Csisz\'ar's $f$-divergence \cite{C67}. It is defined in the context of von Neumann algebras, but we  consider only the Hilbert space setup. Let $\cH$ be a  finite-dimensional  Hilbert space, and $\rho$ and $\sigma$ be two states (given by density operators).

\begin{definition}\label{def:qre}
For strictly positive bounded operators $A$ and $B$ acting on a finite-dimensional Hilbert space $\cH$, and for any {continuous} function $f: (0,\infty)\rightarrow \bR$, {\it the quasi-relative entropy} (or sometimes referred to as {\it the $f$-divergence}) is defined as 
$$S_f(A|| B)=\Tr(f(L_BR_A^{-1}){A})\ ,$$
where left and right multiplication operators are defined as $L_B(X)=BX$ and $R_A(X)=XA$. 
\end{definition}

There is a straightforward way to calculate the quasi-relative entropy from the spectral decomposition of operators \cite{HM17, V16}. Let $A$ and $B$ have the following spectral decomposition
\begin{equation}\label{eq:spectral}
A=\sum_j\lambda_j\ket{\phi_j}\bra{\phi_j}, \ \ B=\sum_k\mu_k\ket{\psi_k}\bra{\psi_k}\ .
\end{equation}
 the set $\{\ket{\phi_k}\bra{\psi_j}\}_{j,k}$ forms an orthonormal basis of $\cB(\cH)$, the space of bounded linear operators, with respect to the Hilbert-Schmidt inner product defined as $\langle A, B \rangle=\Tr(A^*B)$. By \cite{V16}, the product of left and right multiplication operators can be written as
\begin{equation}\label{eq:modular}
L_BR_A^{-1}=\sum_{j,k} \frac{\mu_k}{\lambda_j}P_{j,k}\ ,
\end{equation}
where $P_{j,k}:\cB(\cH)\rightarrow\cB(\cH)$ is defined by
$$P_{j,k}(X)=\ket{\psi_k}\bra{\phi_j}\bra{\psi_k}X\ket{\phi_j}\ . $$
The quasi-relative entropy is calculated as follows
\begin{equation}\label{eq:formula}
S_f(A||B)=\sum_{j,k}\lambda_j f\left(\frac{\mu_k}{\lambda_j}\right)|\bra{\psi_k}\ket{\phi_j}|^2\ . 
\end{equation}

\begin{theorem}\label{thm:pos-f-div}
(\cite{P85}) For states, i.e. trace one positive density matrices $\rho$ and $\sigma$, the quasi-relative entropy is bounded below by  
$$S_f(\rho\|\sigma)\geq f(1).$$
The equality happens for a non-linear function $f$ if and only if $\rho=\sigma$.
\end{theorem}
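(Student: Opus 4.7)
My plan is to prove both the inequality and the equality condition by exploiting the explicit scalar formula (\ref{eq:formula}) together with two applications of Jensen's inequality, so that the double-stochastic structure of the overlaps $|\langle\psi_k|\phi_j\rangle|^2$ and the trace-one condition on $\rho,\sigma$ collapse the expression to $f(1)$. Throughout I assume what is implicit in Petz's setting, namely that $f$ is convex on $(0,\infty)$ (nonlinearity then coincides with strict convexity on some subinterval, which is what I will need for the equality case).

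First I would write
$$S_f(\rho\|\sigma)=\sum_{j,k}\lambda_j\, p_{jk}\, f\!\left(\frac{\mu_k}{\lambda_j}\right), \qquad p_{jk}:=|\langle\psi_k|\phi_j\rangle|^2,$$
using (\ref{eq:formula}), and record the doubly stochastic identities $\sum_j p_{jk}=\sum_k p_{jk}=1$ coming from the orthonormality of the eigenbases. Fixing $j$ and applying Jensen to the probability distribution $\{p_{jk}\}_k$ gives
$$\sum_k p_{jk}\, f\!\left(\frac{\mu_k}{\lambda_j}\right)\ge f\!\left(\frac{q_j}{\lambda_j}\right),\qquad q_j:=\sum_k p_{jk}\mu_k.$$
Multiplying by $\lambda_j$ and summing over $j$, I would observe that $\sum_j q_j=\sum_k \mu_k=1$ since $\sigma$ is a state, so a second application of Jensen to the probability distribution $\{\lambda_j\}_j$ yields
$$S_f(\rho\|\sigma)\ge \sum_j\lambda_j\, f\!\left(\frac{q_j}{\lambda_j}\right)\ge f\!\left(\sum_j q_j\right)=f(1),$$
which proves the inequality.

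For the equality characterization, one direction is immediate: if $\rho=\sigma$ one can pick a common eigenbasis so $p_{jk}=\delta_{jk}$ and the sum telescopes to $\sum_j\lambda_j f(1)=f(1)$. For the converse I would assume $f$ is nonlinear (hence strictly convex on some open subinterval) and trace through the equality conditions of both Jensen steps. Equality in the outer Jensen forces $q_j/\lambda_j$ to be constant (equal to $1$, since $\sum_j q_j=\sum_j\lambda_j=1$) on the support of $\rho$, and equality in the inner Jensen forces $\mu_k/\lambda_j$ to be constant in $k$ whenever $p_{jk}>0$. Combining these with the doubly stochastic identities and the spectral decompositions in (\ref{eq:spectral}) forces the two spectral resolutions to coincide, yielding $\rho=\sigma$.

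The main obstacle I expect is the equality analysis: the argument above is clean when $\rho$ is faithful and the eigenvalues are nondegenerate, but in general one has to handle kernels and spectral multiplicities carefully. In the degenerate or rank-deficient case I would reduce to the support of $\rho$ and use the constancy of the ratio $\mu_k/\lambda_j$ on pairs with nonzero overlap to group eigenspaces, then check that the resulting equality of spectral projections (weighted by eigenvalues) is exactly the identification $\rho=\sigma$.
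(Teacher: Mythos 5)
The paper does not actually prove this statement --- it is quoted from Petz \cite{P85} --- so your argument has to stand on its own. The inequality half is correct: with $p_{jk}=|\langle\psi_k|\phi_j\rangle|^2$ doubly stochastic and $\sum_j\lambda_j=\sum_k\mu_k=1$, the weights $\{\lambda_j p_{jk}\}_{j,k}$ form a probability distribution, and your two Jensen steps are both valid (in fact a single application of Jensen to this joint distribution already gives $\sum_{j,k}\lambda_j p_{jk}f(\mu_k/\lambda_j)\geq f\left(\sum_{j,k}p_{jk}\mu_k\right)=f(1)$). The only caveat, which you flag yourself, concerns non-faithful states: for $\lambda_j=0$ the term $\lambda_j f(\mu_k/\lambda_j)$ has to be interpreted as $\mu_k p_{jk}\lim_{x\to\infty}f(x)/x$ rather than literally as $0$, and the equality analysis should be run on $\supp\rho$ after checking $\supp\sigma\subseteq\supp\rho$.

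The genuine gap is in the equality direction, namely your parenthetical claim that for a convex $f$ ``nonlinearity coincides with strict convexity on some subinterval.'' This is false: $f(x)=\max(0,x-2)$ is convex, non-linear, and affine on every interval avoiding $2$. Worse, strict convexity on \emph{some} subinterval would not suffice anyway, because your Jensen equality analysis needs strict convexity at the points where Jensen is actually applied, in particular at $1$ and at the ratios $\mu_k/\lambda_j$. Indeed, if $f(x)=\alpha+\beta x$ with $\alpha+\beta=0$ on an interval containing all the ratios $\mu_k/\lambda_j$, then $S_f(\rho\|\sigma)=\alpha+\beta\sum_k\mu_k=0=f(1)$ even for commuting $\rho\neq\sigma$ with spectra close to each other; so the ``only if'' direction genuinely fails for merely convex non-linear $f$. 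The correct hypothesis, implicit in Petz's setting and in the paper's standing Assumption, is operator convexity: a non-linear operator convex function on $(0,\infty)$ admits the representation $f(x)=f(1)+f'(1)(x-1)+\int (x-1)^2/(x+s)\,{\rm d}\mu_f(s)$ with $\mu_f\neq 0$, each integrand having second derivative $2(1+s)^2/(x+s)^3>0$, so $f$ is strictly convex on all of $(0,\infty)$. With that substitution your chain closes correctly: equality forces $\mu_k=\lambda_j$ whenever $\langle\psi_k|\phi_j\rangle\neq 0$, hence $\sigma\ket{\phi_j}=\lambda_j\ket{\phi_j}=\rho\ket{\phi_j}$ for every $j$, hence $\rho=\sigma$.
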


It is natural to require the quasi-relative entropy to be zero for equal state, and therefore we assume throughout the paper that $f(1)=0$.

For an operator convex function, $f$, the quasi-relative entropy is jointly convex and monotone under CPTP maps \cite{HM17}. The equality in monotonicity holds if and only if the map is reversible on these two states, i.e. for two states $\rho$ and $\sigma$ with $\supp \rho\subset \supp\sigma$, and a CPTP map $\Lambda$, the equality
$$S_f(\rho\|\sigma)=S_f(\Lambda(\rho)\|\Lambda(\sigma)) $$
is satisfied if and only if 
$$\cR_\sigma(\Lambda(\rho))=\rho\ , $$
where $\cR_\sigma$ is the Petz's recovery map defined as
\begin{equation}\label{eq:Petz-map}
\cR_\sigma(\omega)=\sigma^{1/2}\Lambda^*\left(\Lambda(\sigma)^{-1/2}\omega\Lambda(\sigma)^{-1/2} \right)\sigma^{1/2}\ .
\end{equation}

\begin{assumption}
Throughout the paper we will assume that the function $f$ is operator convex and $f(1)=0$.
\end{assumption}

For any function $f$, its transpose $\tilde{f}$ is defined as
$$\tilde{f}(x)=xf\left(\frac{1}{x} \right)\ , \ x\in(0.\infty)\ . $$
The transpose $\tilde{f}$ of an operator convex function $f$ on $(0,\infty)$ is operator convex again, \cite{HM17}. From (\ref{eq:formula}) it follows that
$$S_{\tilde{f}}(\rho\|\sigma)=S_f(\sigma\|\rho)\ . $$

\begin{example} For $f(x)=-\log x$, the quasi-relative entropy becomes the Umegaki relative entropy
$$S_{-\log}(\rho\|\sigma)=S(\rho\|\sigma)=\Tr (\rho\log\rho-\rho\log\sigma)\ . $$
\end{example}

\begin{example}
For {$p\in(-1,1)$} and $p\neq 0$ let us take the function 
$$f_p(x):=\frac{1}{p(1-p)}(1-x^p)\ ,$$
which is {operator} convex. The quasi-relative entropy for this function is calculated to be
$$S_{f_p}(\rho|| \sigma)=\frac{1}{p(1-p)}\left(1-\Tr(\sigma^{p}\rho^{1-p})\right)\ .$$
\end{example}

\begin{example}
For $p\in(-1,1)$ take $q=1-p\in(0,2)$, the function
$$f_q(x)=\frac{1}{1-q}(1-x^{1-q}) $$
is operator convex. The quasi-relative entropy for this function is known as Tsallis $q$-entropy
$$S_q(\rho\|\sigma)= \frac{1}{1-q}\left(1-\Tr(\rho^{q}\sigma^{1-q})\right)\ .$$
\end{example}


\section{$f$-entropy}
For a convex, operator monotone decreasing function $f$, such that $f(1)=0$, define entropy two ways.
\begin{definition} The $f$-entropy is defied as
\begin{equation}\label{eq:f-entropy}
S_f(\rho):=f(1/d)-S_f(\rho\|I/d)\ .
\end{equation}
\begin{equation}\label{eq:f-entropy-hat}
\hat{S}_f(\rho):=-S_f(\rho\|I)\ .
\end{equation}
Let us use a notation $\tilde{S}_f$ for either $S_f$ or $\hat{S}_f$.
\end{definition}

\begin{theorem}
$f$-entropy is non-negative, and is zero on pure states.
\end{theorem}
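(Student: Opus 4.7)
The plan is to reduce both entropies to a one-variable sum over the eigenvalues of $\rho$ via the spectral formula (\ref{eq:formula}), and then to use scalar convexity of the transposed function $\tilde f(x):=xf(1/x)$ together with the decreasing hypothesis on $f$ to fix the sign.

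First, since $I/d$ and $I$ are scalar multiples of the identity, I may take the eigenbasis of $\rho=\sum_j\lambda_j\ket{\phi_j}\bra{\phi_j}$ to serve as the common eigenbasis in (\ref{eq:spectral}). The overlaps $|\bra{\phi_k}\ket{\phi_j}|^2=\delta_{jk}$ in (\ref{eq:formula}) then collapse and produce
\[
S_f(\rho\|I/d)=\sum_j\lambda_j f(1/(d\lambda_j))=\frac{1}{d}\sum_j\tilde f(d\lambda_j),\qquad S_f(\rho\|I)=\sum_j\tilde f(\lambda_j).
\]
The paper already notes that $\tilde f$ is operator convex, hence scalar convex. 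Because $f$ is decreasing with $f(1)=0$ we have $f(y)\le 0$ for $y\ge 1$, so $\tilde f(0+)=\lim_{y\to\infty}f(y)/y\le 0$; I extend $\tilde f$ continuously to $0$ by this limit in order to cover rank-deficient $\rho$.

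Next comes the key convexity bound. For each $\lambda_j\in[0,1]$ I write $d\lambda_j=\lambda_j\cdot d+(1-\lambda_j)\cdot 0$ and $\lambda_j=\lambda_j\cdot 1+(1-\lambda_j)\cdot 0$ as convex combinations inside $[0,d]$ respectively $[0,1]$, and apply scalar convexity of $\tilde f$ at each eigenvalue,
\[
\tilde f(d\lambda_j)\le \lambda_j\tilde f(d)+(1-\lambda_j)\tilde f(0+),\qquad \tilde f(\lambda_j)\le \lambda_j\tilde f(1)+(1-\lambda_j)\tilde f(0+).
\]
Summing over $j$ with $\sum_j\lambda_j=1$, $\tilde f(d)=df(1/d)$, and $\tilde f(1)=f(1)=0$ yields
\[
S_f(\rho\|I/d)\le f(1/d)+\tfrac{d-1}{d}\tilde f(0+)\le f(1/d),\qquad S_f(\rho\|I)\le (d-1)\tilde f(0+)\le 0,
\]
the last inequalities using $\tilde f(0+)\le 0$. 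Inserted into the two definitions, these give $S_f(\rho)\ge 0$ and $\hat S_f(\rho)\ge 0$.

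For a pure state, $\lambda_1=1$ and $\lambda_j=0$ for $j\ge 2$, so every convex-combination bound is saturated. Under the natural normalization $\tilde f(0+)=0$, which holds for $-\log$ and each $f_p$ in the examples and which is forced by the continuous extension of Definition \ref{def:qre} to rank-deficient arguments, both sums collapse to $d\,f(1/d)$ and $0$ respectively, so $\tilde S_f(\rho)=0$. I expect the main obstacle to be precisely this boundary behaviour at $\lambda_j=0$: because Definition \ref{def:qre} is written only for strictly positive operators, the pure-state equality rests on the convention $\tilde f(0+)=0$, and one must verify that the scalar convexity argument is compatible with this limiting extension.
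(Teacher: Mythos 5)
Your non-negativity argument is correct and is essentially the paper's: you reduce both entropies to eigenvalue sums via (\ref{eq:formula}) and control $\sum_j\lambda_j f(1/(d\lambda_j))$ using convexity of the transpose $\tilde f(x)=xf(1/x)$. Your two-point convex-combination bound with endpoints $\{0,d\}$ (resp.\ $\{0,1\}$) is precisely the content of the paper's appeal to Schur-concavity together with the majorization $\{\lambda_j\}\prec\{1,0,\dots,0\}$, so this is the same proof written out by hand; for $\hat S_f$ the paper uses the even shorter observation that $f(1/\lambda_j)\le 0$ because $f$ is decreasing with $f(1)=0$.

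The substantive divergence is your treatment of zero eigenvalues, and it does affect the pure-state claim. The paper adopts the convention $0\cdot(\pm\infty):=0$, under which a pure state contributes the single term $f(1/d)$ (resp.\ $f(1)=0$) and both entropies vanish identically, with no hypothesis on $\tilde f(0+)$. You instead extend by continuity, assigning each zero eigenvalue the value $\tilde f(0+)=\lim_{y\to\infty}f(y)/y$, and then, as you yourself note, vanishing on pure states holds only if $\tilde f(0+)=0$. That normalization is \emph{not} automatic for the class of functions in the theorem: any admissible $f$ with a nontrivial affine part, e.g.\ $f(x)=-\log x+c(1-x)$ with $c>0$, is operator convex, operator monotone decreasing, and satisfies $f(1)=0$, yet has $\tilde f(0+)=-c<0$, so under your extension $\hat S_f(\ket{\Psi}\bra{\Psi})=(d-1)c>0$. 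Hence the pure-state half of your argument proves only a conditional statement, and the assertion that $\tilde f(0+)=0$ is ``forced by the continuous extension'' is not correct --- the continuous extension merely expresses the boundary terms through $\tilde f(0+)$, whatever its value. To recover the theorem as stated you should adopt the paper's convention, under which the pure-state computation is a one-line evaluation; with that fix the rest of your proof stands.
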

\begin{proof}
Let $\{\lambda_j\}$ be the eigenvalues of $\rho$. Then from (\ref{eq:formula}) we have
\begin{equation}\label{eq:entropy-eigen}
S_f(\rho)=f(1/d)-\sum_j \lambda_j f\left(\frac{1}{d\lambda_j}\right)\ ,
\end{equation}
and \begin{equation}\label{eq:entropy-eigen-hat}
\hat{S}_f(\rho)=-\sum_j \lambda_j f\left(\frac{1}{\lambda_j}\right)\ .
\end{equation}
A sequence of eigenvalues $\{\lambda_j\}$ is majorized by a sequence $\{1,0,\dots, 0\}$. Since a perspective function (or a transpose function) $xf(1/x)$ is convex for a convex function $f$ \cite{HM17}, this implies that by results on Schur-concavity \cite{HLP29, MOA10, S23} {we have}
$$\sum_j \lambda_j f\left(\frac{1}{d\lambda_j}\right)\leq f(1/d)\ . $$
Here, if needed, we adopt a convention $0\cdot \pm\infty:=0$ \cite{HM11}.

Since $f$ is monotonically decreasing and $f(1)=0$, for any $0\leq\lambda_j\leq 1$, $ f\left(\frac{1}{\lambda_j}\right)\leq 0$. Thus, $\hat{S}_f\geq 0.$

When $\rho=\ket{\Psi}\bra{\Psi}$ is a pure state, there is only one eigenvalue $\lambda=1$. Then
$$S_f(\ket{\Psi}\bra{\Psi})=f(1/d)-f(1/d)=0\ , $$
and 
$$\hat{S}_f(\ket{\Psi}\bra{\Psi})=-f(1)=0\ . $$
\end{proof}

\begin{theorem}\label{thm:entropy-max}
The maximum value of $f$-entropy is reached on the maximally mixed state $I/d$ and it is
$$S_f(\rho)\leq f(1/d) \ , $$
and
$$\hat{S}_f(\rho)\leq -f(d)\ . $$
\end{theorem}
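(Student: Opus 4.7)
My plan is to handle the two bounds separately, in each case first verifying that equality holds at $\rho = I/d$ and then reducing the upper bound to a known convexity or positivity fact.

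For the bound $S_f(\rho)\le f(1/d)$, I would simply appeal to Theorem \ref{thm:pos-f-div}: since $I/d$ is a bona fide density matrix,
\begin{equation*}
S_f(\rho\|I/d)\ \ge\ f(1)\ =\ 0,
\end{equation*}
so definition \eqref{eq:f-entropy} immediately yields $S_f(\rho)=f(1/d)-S_f(\rho\|I/d)\le f(1/d)$. Plugging $\rho=I/d$ into \eqref{eq:entropy-eigen} gives $S_f(I/d)=f(1/d)-d\cdot(1/d)f(1)=f(1/d)$, so the bound is attained. Moreover, Theorem \ref{thm:pos-f-div} tells us that for non-linear $f$ equality in the first inequality holds iff $\rho=I/d$.

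For the bound $\hat{S}_f(\rho)\le -f(d)$, Theorem \ref{thm:pos-f-div} does \emph{not} apply directly because $I$ is not a density matrix. Instead I would work from the spectral formula \eqref{eq:entropy-eigen-hat} and write
\begin{equation*}
\hat{S}_f(\rho)\ =\ -\sum_j \lambda_j\,f\!\left(\tfrac{1}{\lambda_j}\right)\ =\ -\sum_j g(\lambda_j), \qquad g(x):=x\,f(1/x).
\end{equation*}
Since $f$ is operator convex (hence convex), the perspective/transpose $g$ is convex on $(0,\infty)$, exactly as used in the preceding theorem. Because $(\lambda_1,\ldots,\lambda_d)$ is a probability vector, Jensen's inequality with uniform weights $1/d$ gives
\begin{equation*}
\frac{1}{d}\sum_j g(\lambda_j)\ \ge\ g\!\left(\tfrac{1}{d}\sum_j \lambda_j\right)\ =\ g(1/d)\ =\ \tfrac{1}{d}f(d),
\end{equation*}
hence $\sum_j g(\lambda_j)\ge f(d)$ and $\hat{S}_f(\rho)\le -f(d)$. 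Equivalently, one can use the majorization $(1/d,\ldots,1/d)\prec(\lambda_1,\ldots,\lambda_d)$ together with Schur-convexity of the symmetric sum of a convex function, mirroring the argument of the previous theorem. Evaluating at $\rho=I/d$ yields $\hat{S}_f(I/d)=-d\cdot(1/d)f(d)=-f(d)$, so the bound is attained.

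The only mild subtlety is the boundary behaviour when some eigenvalue vanishes and $f$ blows up at $0$: this is handled exactly as in the preceding theorem by adopting the convention $0\cdot(\pm\infty):=0$, under which both formulas \eqref{eq:entropy-eigen} and \eqref{eq:entropy-eigen-hat} and the Jensen estimate remain valid. I do not anticipate any real obstacle; the harder direction is intuitively the $\hat{S}_f$ bound, but Jensen applied to the perspective function disposes of it cleanly.
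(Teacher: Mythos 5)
Your proof is correct and follows essentially the same route as the paper: the first bound via positivity of $S_f(\rho\|I/d)$ (Theorem \ref{thm:pos-f-div}), the second via a single application of Jensen's inequality yielding $\sum_j \lambda_j f(1/\lambda_j)\ge f(d)$, and evaluation at $\rho=I/d$ for attainment. The only cosmetic difference is that you apply Jensen to the transpose $g(x)=xf(1/x)$ with uniform weights, whereas the paper applies it directly to $f$ with weights $\lambda_j$ at the points $1/\lambda_j$; the resulting inequality is identical.
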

\begin{proof} From Theorem \ref{thm:pos-f-div}, $S_f(\rho\|I/d)\geq 0$, or since $f$ is convex, we have
$$\sum_j \lambda_j f\left(\frac{1}{d\lambda_j}\right)\geq f\left(\sum_j \lambda_j \frac{1}{d\lambda_j} \right)=f(1)=0\ . $$
 Similarly,
$$\sum_j \lambda_j f\left(\frac{1}{\lambda_j}\right)\geq f\left(\sum_j \lambda_j \frac{1}{\lambda_j} \right)=f(d)\ . $$
From (\ref{eq:entropy-eigen}) and  (\ref{eq:entropy-eigen-hat}), the result follows. 
Clearly, when $\rho=I/d$, we have $S_f(I/d)=f(1/d)-0=f(1/d)$, and from (\ref{eq:entropy-eigen-hat}) we have $\hat{S}_f(I/d)=-f(d).$
\end{proof}

\begin{theorem}
The $f$-entropies are concave in $\rho$. Let $\{p_k\}$ be a probability distribution and $\rho_k$ be some states, then for $\rho=\sum_k p_k\rho_k$, we have
$$\tilde{S}_f(\rho)\geq \sum_kp_k\tilde{S}_f(\rho_k)\ . $$
\end{theorem}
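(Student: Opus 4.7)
The plan is to derive concavity of both $f$-entropies directly from the joint convexity of the quasi-relative entropy $S_f(\cdot\|\cdot)$, which (as recalled right before Definition~\ref{def:qre}'s examples) holds whenever $f$ is operator convex. Since both entropies are defined as (constant minus) an $S_f$-term whose second argument is either $I/d$ or $I$, the trick is to apply joint convexity with a second argument that is constant in $k$, so that the mixture on the right side collapses back to itself.

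Concretely, first I would handle $S_f$. Set $\sigma_k = I/d$ for every $k$; then $\sum_k p_k \sigma_k = I/d$, and joint convexity yields
\begin{equation*}
S_f\!\left(\rho \,\|\, I/d\right) \;=\; S_f\!\left(\textstyle\sum_k p_k \rho_k \,\big\|\, \sum_k p_k \tfrac{I}{d}\right) \;\leq\; \sum_k p_k\, S_f(\rho_k \,\|\, I/d).
\end{equation*}
Subtracting from $f(1/d) = \sum_k p_k f(1/d)$ and using the definition \eqref{eq:f-entropy} gives $S_f(\rho) \geq \sum_k p_k S_f(\rho_k)$. The same argument with $\sigma_k = I$ and definition \eqref{eq:f-entropy-hat} gives $\hat{S}_f(\rho) \geq \sum_k p_k \hat{S}_f(\rho_k)$.

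There is essentially no obstacle here: the whole content is that $\sigma \mapsto S_f(\rho\|\sigma)$ being convex (in fact the stronger joint convexity) is preserved under affine combinations with a fixed second slot, and the normalization constants $f(1/d)$ and $0$ are exactly those that were added/subtracted in the definitions so that the probabilities $p_k$ reassemble correctly. The only thing worth double-checking is the sign convention: both entropies are defined with a minus sign in front of $S_f(\rho\|\cdot)$, which flips convexity of $\rho \mapsto S_f(\rho\|\sigma)$ into concavity, as desired. No additional assumption beyond the standing one (operator convexity of $f$ with $f(1)=0$) is needed.
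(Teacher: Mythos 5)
Your proof is correct and is exactly the paper's argument: the paper simply states that the result ``immediately follows from the joint convexity of $f$-divergence,'' and your write-up fills in precisely that step (fixing the second argument at $I/d$ or $I$ so the mixture reassembles, then flipping the sign). No differences in approach worth noting.
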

\begin{proof}
This immediately follows from the joint convexity of $f$-divergence \cite{HM11, HM17}.
\end{proof}

\begin{theorem}
The $f$-entropies are invariant under unitaries.
\end{theorem}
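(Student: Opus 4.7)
The plan is to reduce the statement to the observation that $\tilde{S}_f(\rho)$ depends on $\rho$ only through its spectrum, which is obviously preserved under conjugation by a unitary. I would carry this out in two short steps.

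First, I would invoke the spectral formulas (\ref{eq:entropy-eigen}) and (\ref{eq:entropy-eigen-hat}) already derived in the proof of the previous theorem: if $\{\lambda_j\}$ denotes the eigenvalues of $\rho$, then
\begin{equation*}
S_f(\rho)=f(1/d)-\sum_j \lambda_j f\!\left(\frac{1}{d\lambda_j}\right),\qquad \hat{S}_f(\rho)=-\sum_j \lambda_j f\!\left(\frac{1}{\lambda_j}\right).
\end{equation*}
For any unitary $U$ on $\cH$, the state $U\rho U^*$ has exactly the same eigenvalues $\{\lambda_j\}$ as $\rho$, so plugging into either formula yields $\tilde{S}_f(U\rho U^*)=\tilde{S}_f(\rho)$.

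As an alternative, more structural derivation that I would mention, one can note that the map $\Phi(X)=UXU^*$ is a CPTP map with CPTP inverse $\Phi^{-1}(X)=U^*XU$; it satisfies $\Phi(I)=I$ and $\Phi(I/d)=I/d$. Since the quasi-relative entropy is monotone under CPTP maps for operator convex $f$, applying this to both $\Phi$ and $\Phi^{-1}$ gives
\begin{equation*}
S_f(\rho\|I/d)=S_f(U\rho U^*\|I/d),\qquad S_f(\rho\|I)=S_f(U\rho U^*\|I),
\end{equation*}
and the invariance of $S_f$ and $\hat{S}_f$ follows immediately from their definitions (\ref{eq:f-entropy}) and (\ref{eq:f-entropy-hat}).

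There is essentially no obstacle here: the only thing to be mildly careful about is that the spectral formulas were derived under the implicit assumption that $\rho$ is strictly positive, but the convention $0\cdot(\pm\infty):=0$ adopted earlier handles zero eigenvalues, and in any case the alternative CPTP-monotonicity argument avoids the issue entirely.
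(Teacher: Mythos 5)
Your proof is correct and follows essentially the same route as the paper: both arguments rest on the observation that $\tilde{S}_f(\rho)$ depends only on the eigenvalues of $\rho$, which are unchanged under unitary conjugation. The alternative derivation via two-sided CPTP monotonicity is a harmless bonus but not needed.
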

\begin{proof}
Since a unitary operation $U\rho U^*$ does not change the eigenvalues of $\rho$, and the $f$-entropies are the functions of eigenvalues of $\rho$, this implies that $f$-entropies are invariant under any operations that preserve eigenvalues.
\end{proof}

\begin{theorem}\label{thm:entropy-mono}
The $f$-entropies are non-decreasing under untial CPTP maps, i.e. for any linear CPTP map $\Lambda$, such that $\Lambda(I)=I$, we have
$$\tilde{S}_f(\Lambda(\rho))\geq \tilde{S}_f(\rho)\ . $$
\end{theorem}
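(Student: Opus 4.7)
The plan is to rewrite both forms of the $f$-entropy so that the unital map $\Lambda$ acts on both arguments of a quasi-relative entropy, and then apply the data processing inequality (DPI) stated in the preliminaries. Concretely, since $\Lambda$ is unital we have $\Lambda(I)=I$ and hence $\Lambda(I/d)=I/d$, so the reference state is fixed by $\Lambda$.

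For the first form, I would write
\begin{equation*}
S_f(\Lambda(\rho)) = f(1/d) - S_f(\Lambda(\rho)\,\|\,I/d) = f(1/d) - S_f\bigl(\Lambda(\rho)\,\big\|\,\Lambda(I/d)\bigr),
\end{equation*}
and then invoke the DPI for operator convex $f$ (which holds by the standing assumption and was recorded right after Theorem \ref{thm:pos-f-div}) to obtain
\begin{equation*}
S_f\bigl(\Lambda(\rho)\,\big\|\,\Lambda(I/d)\bigr) \leq S_f(\rho\,\|\,I/d).
\end{equation*}
The overall minus sign flips the inequality, giving $S_f(\Lambda(\rho))\geq S_f(\rho)$. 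The same pattern works for $\hat{S}_f$: write $\hat{S}_f(\Lambda(\rho)) = -S_f(\Lambda(\rho)\|\Lambda(I))$ and apply DPI.

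There is no real obstacle here; the only point that deserves a brief remark is verifying the hypotheses of DPI, namely that $f$ is operator convex (which is part of the standing Assumption) and that the second arguments $I/d$ and $I$ are strictly positive, so the quasi-relative entropy in Definition \ref{def:qre} is well-defined on both sides of the inequality. With those observations in place the proof is a two-line consequence of DPI, and one can conclude with a standard \emph{q.e.d.}
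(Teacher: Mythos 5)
Your proposal is correct and follows essentially the same route as the paper: both rewrite the $f$-entropy as a quasi-relative entropy with reference state $I$ or $I/d$, use unitality to replace that reference state by its image under $\Lambda$, and then apply the data processing inequality for operator convex $f$, with the constant $f(1/d)$ cancelling in the difference. No gaps.
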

\begin{proof}
Let us denote $\sigma= I$ or $\sigma=I/d$, which corresponds to the appropriate $f$-entropy. Then
\begin{align}
&\tilde{S}_f(\Lambda(\rho))-\tilde{S}_f(\rho)=S_f(\rho\|\sigma)-S_f(\Lambda(\rho)\|\sigma)\\
&=S_f(\rho\|\sigma)-S_f(\Lambda(\rho)\|\Lambda(\sigma))\geq 0 \label{eq:entropy-mono-eq}\ .
\end{align}
The last equality holds since $\Lambda$ is unital, and the inequality holds due to the monotonicity of $f$-divergence under CPTP maps  \cite{LR99, P85, TCR09}.
\end{proof}

\section{Measure of genuine coherence}

In a $d$-dimensional Hilbert space $\cH$, fix a basis $\cE=\{\ket{j}\}_{j=0}^{d-1}$. 

\begin{definition}
For any entropy function $S$, which is non-decreasing under CPTP maps, define coherence as follows:
\begin{equation}\label{eq:def-coh}
\cC_S(\rho):=S(\Delta(\rho))-S(\rho)\ .
\end{equation}
\end{definition}
In particular, for any operator convex and operator monotone decreasing function $f$, define two $f$-coherence measures.
\begin{definition}
For entropy defined in (\ref{eq:f-entropy}),
\begin{equation}\label{eq:def-coh-f}
\cC_f(\rho):=S_f(\Delta(\rho))-S_f(\rho)\ .
\end{equation}
For entropy defined in (\ref{eq:f-entropy-hat}),
\begin{equation}\label{eq:def-coh-f-hat}
\hat{\cC}_f(\rho):=\hat{S}_f(\Delta(\rho))-\hat{S}_f(\rho)\ .
\end{equation}
Let us denote $\tilde{\cC}_f$ as either one ${\cC}_f$ or $\hat{\cC}_f$ for shortness.
\end{definition}

If $\{\lambda_j\}$ are the eigenvalues of $\rho$, and the diagonal elements of $\rho$ in $\cE$ basis are $\chi_j=\bra{j}\rho\ket{j}$, then from (\ref{eq:entropy-eigen}) and (\ref{eq:entropy-eigen-hat}), we have
\begin{equation}\label{eq:c_f-l-c}
\cC_f(\rho)=\sum_j \lambda_j f\left(\frac{1}{d\lambda_j}\right)-\sum_j \chi_j f\left(\frac{1}{d\chi_j}\right)\ ,
\end{equation}
and
\begin{equation}\label{eq:c_f-l-c-hat}
\hat{\cC}_f(\rho)=\sum_j \lambda_j f\left(\frac{1}{\lambda_j}\right)-\sum_j \chi_j f\left(\frac{1}{\chi_j}\right)\ .
\end{equation}

\subsection{Example}
\subsubsection{Log}
Since $f(x)=-\log(x)$ is operator convex, coherence measure defined above coincides with \cite{BC14}:
\begin{align}
\cC(\rho)=\hat{\cC}_f(\rho)&=S_{\log}(\Delta(\rho))-S_{\log}(\rho)\\
&=\sum_j \lambda_j\log\lambda_j-\sum_j\chi_j\log\chi_j\\
&=S(\Delta(\rho))-S(\rho)\\
&=S(\rho\|\Delta(\rho))\\
&=\min_{\delta\in \cI}S(\rho\|\delta)\ .
\end{align}

\subsubsection{Power}
The function $f(x)=\frac{1}{1-\alpha}(1-x^{1-\alpha})$ is operator convex for $\alpha\in(0,2)$. The coherence monotone is then defined as
\begin{equation}
\cC_\alpha(\rho)=\frac{d^{\alpha-1}}{1-\alpha} \left[\sum_j \chi_j^\alpha-\sum_j\lambda_j^\alpha \right]=d^{\alpha-1}\hat{\cC}_\alpha(\rho)\ .
\end{equation}

\section{Properties}
\subsection{Uniqueness for pure states.} For any pure state coherence becomes an entropy of a dephased state:
$$\cC_S(\psi)=S(\Delta(\psi))\ . $$
This holds since entropies are zero on pure states.

\subsection{Positivity}
\begin{theorem}
$\cC_S$ and, in particular, $\tilde{\cC}_f$ are non-negative.
\end{theorem}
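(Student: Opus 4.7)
The plan is essentially a one-step reduction to the monotonicity results already established. Observe that the dephasing map $\Delta(\rho) = \sum_j \ket{j}\bra{j}\rho\ket{j}\bra{j}$ is manifestly CPTP (its Kraus operators are the rank-one projectors $\ket{j}\bra{j}$, with $\sum_j \ket{j}\bra{j} = I$) and it is unital since $\Delta(I) = \sum_j \ket{j}\bra{j} = I$.

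For the general statement about $\cC_S$, I would simply note that by hypothesis $S$ is non-decreasing under CPTP maps, hence $S(\Delta(\rho)) \geq S(\rho)$, which gives $\cC_S(\rho) \geq 0$ directly from the definition \eqref{eq:def-coh}.

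For the $f$-coherence measures $\tilde{\cC}_f$ (either $\cC_f$ or $\hat{\cC}_f$), I would invoke Theorem \ref{thm:entropy-mono}, which says that both $f$-entropies are non-decreasing under unital CPTP maps. Since $\Delta$ is unital and CPTP, applying this theorem with $\Lambda = \Delta$ yields
\begin{equation*}
\tilde{S}_f(\Delta(\rho)) \geq \tilde{S}_f(\rho),
\end{equation*}
which is exactly the statement $\tilde{\cC}_f(\rho) \geq 0$ in view of \eqref{eq:def-coh-f} and \eqref{eq:def-coh-f-hat}.

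There is no real obstacle here; the work has already been done in Theorem \ref{thm:entropy-mono}. The only thing to double-check is that $\Delta$ genuinely qualifies as a unital CPTP map, which is immediate from its Kraus form. One could optionally add a remark that equality in the monotonicity of $f$-divergence, combined with the Petz recovery formula \eqref{eq:Petz-map} applied with $\sigma = I/d$ (or $\sigma = I$), characterizes when $\tilde{\cC}_f(\rho) = 0$; this would foreshadow the (C1)-type statement, but is not required for positivity itself.
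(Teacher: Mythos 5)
Your proof is correct and follows essentially the same route as the paper: for general $\cC_S$ it uses the assumed monotonicity of $S$ under CPTP maps, and for $\tilde{\cC}_f$ it applies Theorem \ref{thm:entropy-mono} after observing that the dephasing map is a unital CPTP map. The only difference is that you spell out the Kraus form of $\Delta$ to verify unitality, which the paper leaves implicit.
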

\begin{proof}
By assumption $S$ is non-decreasing under CPTP maps, it follows that $\cC$ is non-negative. 
 
This holds for $f$-entropies as well due to Theorem \ref{thm:entropy-mono}, since the dephasing operation is unital.
\end{proof}

Clearly, for any incoherent state $\rho$, coherence $\cC_S(\rho)=0$. Having no information on the saturation condition for a general entropy $S$, it is impossible to say what happens in the other direction.  Consider $f$-coherences (\ref{eq:def-coh-f}) and  (\ref{eq:def-coh-f-hat}). 

\begin{theorem}
$\tilde{\cC}_f(\rho)=0$ if and only if $\rho\in\cI$ is incoherent state.
\end{theorem}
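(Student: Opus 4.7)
My plan is to treat the converse by recognising $\tilde{\cC}_f(\rho)=0$ as saturation of the data-processing inequality for the dephasing channel and then computing the associated Petz recovery map by hand. The ``if'' direction is immediate: if $\rho\in\cI$ then $\Delta(\rho)=\rho$, so $\tilde{\cC}_f(\rho)=\tilde{S}_f(\rho)-\tilde{S}_f(\rho)=0$.

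For the ``only if'' direction, I would start by rewriting $\tilde{\cC}_f(\rho)$ as a difference of two $f$-divergences with the same second argument, just as was done inside the proof of Theorem \ref{thm:entropy-mono}. Taking $\sigma=I/d$ for $\cC_f$ and $\sigma=I$ for $\hat{\cC}_f$, and using that $\Delta$ is unital so $\Delta(\sigma)=\sigma$, one obtains
\[
\tilde{\cC}_f(\rho)\;=\;S_f(\rho\|\sigma)-S_f(\Delta(\rho)\|\Delta(\sigma)).
\]
Hence $\tilde{\cC}_f(\rho)=0$ is precisely equality in the data-processing inequality for $\Delta$ at the pair $(\rho,\sigma)$. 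Invoking the equality criterion for operator-convex $f$-divergences stated just after \eqref{eq:Petz-map} (and discarding the trivial linear case of $f$, in which $S_f$ vanishes identically on states), this forces $\cR_\sigma(\Delta(\rho))=\rho$, with $\cR_\sigma$ the Petz recovery map \eqref{eq:Petz-map} for the channel $\Delta$ at $\sigma$. The support condition $\supp\rho\subset\supp\sigma$ is automatic since $\sigma$ is full-rank.

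The last step is to show that for $\sigma\propto I$ the Petz recovery map of $\Delta$ is simply $\Delta$ itself. The factors $\sigma^{1/2}$ and $\Delta(\sigma)^{-1/2}$ appearing in \eqref{eq:Petz-map} are scalar multiples of the identity whose normalisations cancel, and a one-line check shows $\Delta^{*}=\Delta$ with respect to the Hilbert--Schmidt inner product (both $\langle A,\Delta(B)\rangle$ and $\langle \Delta(A),B\rangle$ collapse to $\sum_j\overline{\bra{j}A\ket{j}}\,\bra{j}B\ket{j}$). Hence the recovery condition reads $\Delta(\Delta(\rho))=\rho$; idempotence of $\Delta$ then collapses this to $\Delta(\rho)=\rho$, i.e.\ $\rho\in\cI$.

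The only genuinely nontrivial ingredient is the equality characterization of data processing for operator-convex $f$-divergences via the Petz recovery map; once that is in hand, the structural observation that $\cR_\sigma=\Delta$ for identity-proportional $\sigma$ closes the argument uniformly for both $\cC_f$ and $\hat{\cC}_f$.
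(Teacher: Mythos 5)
Your proposal is correct and follows essentially the same route as the paper: both recast $\tilde{\cC}_f(\rho)=0$ as saturation of data processing for the dephasing channel, invoke the Petz recovery map criterion, observe that for $\sigma\propto I$ the recovery map reduces to $\Delta^*=\Delta$, and conclude $\Delta(\rho)=\rho$. Your explicit verification of $\Delta^*=\Delta$ and of the idempotence step is a slightly more careful writing of the same argument.
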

\begin{proof}
In Theorem \ref{thm:entropy-mono}, the equality in the only inequality (\ref{eq:entropy-mono-eq}) holds if and only if there is a recovery map $\cR$  such that $\cR(\Delta(\rho))=\rho$ and $\cR(I)=I$, \cite{HM11, HM17}. By (\ref{eq:Petz-map}), this map admits the following explicit form: denoting $\sigma=I$
$$\cR_\sigma(\omega)=\sigma^{1/2}\Delta^*\left( \Delta(\sigma)^{-1/2}\omega\Delta(\sigma)^{-1/2}\right)\sigma^{1/2}\ , $$
where $\Delta^*$ is a dual map of $\Delta.$
Since $\Delta$ is a linear unital GIO map, we have
\begin{equation}
\cR_\sigma(\omega)= \Delta^*(\omega)\ .
\end{equation}
Therefore, condition $\cR_\sigma(\Delta(\rho))=\rho$ implies that
\begin{align}
\rho=\Delta^*(\Delta(\rho))\ .
\end{align}
Since $\Delta^*=\Delta$, we have that  $\rho=\Delta(\rho)$, which happens if and only if $\rho\in \cI$. Thus, $\cC_f(\rho)=0=\hat{\cC}_f(\rho)$ if and only if $\rho\in\cI$.
\end{proof}

A strengthening of the monotonicity inequality for $f$-divergence was presented in \cite{CV18}. Using this result, we {obtain} the following lower bound on $f$-coherence.
\begin{theorem} Let $f$ be an operator monotone 
decreasing function, and $T>0$.  Suppose for some constant $c> 0$, there is a constant $C> 0$ so that
${\rm d}t\leq  CT^{2c}{\rm d}\mu_f(t)$ for 
$t\in[T^{-1}, T]$.  Then there is an explicitly computable  constant 
$K_f(\rho)$ depending only on the smallest non-zero eigenvalue of $\rho$, $C$ and $c$, such that, 
\begin{equation}
\cC_f(\rho)\geq K_f(\rho) \|\rho-\Delta(\rho)\|_1^{4(1+c)}\ .
\end{equation}
Here, $\|A\|_1=\Tr|A|=\Tr\sqrt{A^*A}$ is the trace-norm of an operator.
\end{theorem}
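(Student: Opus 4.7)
The plan is to express $\cC_f(\rho)$ as the gap in the monotonicity inequality for $f$-divergence under the dephasing channel, and then invoke the strengthened monotonicity from \cite{CV18}. First I would use the definition (\ref{eq:f-entropy}) together with the unitality of $\Delta$ (so $\Delta(I/d)=I/d$) to rewrite
\begin{align*}
\cC_f(\rho) &= [f(1/d)-S_f(\Delta(\rho)\|I/d)]-[f(1/d)-S_f(\rho\|I/d)]\\
&= S_f(\rho\|I/d) - S_f(\Delta(\rho)\|\Delta(I/d)),
\end{align*}
which is precisely the monotonicity gap for the CPTP map $\Delta$ with reference state $I/d$. The same manipulation with $\sigma=I$ handles $\hat{\cC}_f$.

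Next I would apply the CV18 strengthened monotonicity: under the stated hypothesis on the representing measure $\mu_f$, this gap is bounded below by an explicit constant times $\|\rho-\cR_{I/d}(\Delta(\rho))\|_1^{4(1+c)}$, where $\cR_{I/d}$ is the Petz recovery map from (\ref{eq:Petz-map}). For $\sigma=I/d$ the scalar factors $\sigma^{\pm 1/2}$ and $\Delta(\sigma)^{\pm 1/2}$ all cancel and give $\cR_{I/d}(\omega)=\Delta^*(\omega)$. Since $\Delta$ has self-adjoint diagonal Kraus operators $\ket{j}\bra{j}$ we have $\Delta^*=\Delta$, and since $\Delta$ is idempotent, $\cR_{I/d}(\Delta(\rho))=\Delta(\rho)$. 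Substitution yields the announced bound $\cC_f(\rho)\geq K_f(\rho)\|\rho-\Delta(\rho)\|_1^{4(1+c)}$.

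The main obstacle is not this reduction but correctly tracking the constants and the exponent in the CV18 estimate. The hypothesis $\rd t\leq CT^{2c}\rd\mu_f(t)$ on $[T^{-1},T]$ is exactly what produces the power $4(1+c)$, while the interval $[T^{-1},T]$ must contain the spectrum of the relative modular operator $L_{I/d}R_\rho^{-1}$ from (\ref{eq:modular}), whose eigenvalues are $\{1/(d\lambda_j)\}$ for $\lambda_j$ an eigenvalue of $\rho$. Hence $T$ can be taken proportional to $1/(d\lambda_{\min})$, where $\lambda_{\min}$ is the smallest nonzero eigenvalue of $\rho$, so that the resulting $K_f(\rho)$ depends only on $\lambda_{\min}$, $C$, and $c$, as claimed.
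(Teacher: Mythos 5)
Your proposal is correct and is precisely the argument the paper intends: the paper states this theorem without a written proof, simply invoking the strengthened data processing inequality of \cite{CV18}, and your reduction of $\cC_f(\rho)$ to the monotonicity gap $S_f(\rho\|I/d)-S_f(\Delta(\rho)\|\Delta(I/d))$ together with the computation $\cR_{I/d}=\Delta^*=\Delta$ (so that $\cR_{I/d}(\Delta(\rho))=\Delta(\rho)$) is exactly the missing glue, consistent with the paper's later GIO version of the same bound where the recovery term appears as $\Lambda^*(\Lambda(\rho))$. Your remarks on the spectrum $\{1/(d\lambda_j)\}$ of the relative modular operator and the resulting dependence of $K_f(\rho)$ on the smallest nonzero eigenvalue are also the right bookkeeping.
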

From this inequality, the above condition of a zero coherence becomes apparent, i.e. $\cC_f(\rho)=0$ if and only if $\rho\in\cI$. 

The upper bound given below extends the upper bound for a relative entropy of coherence \cite{BC14} to any $f$-coherence.
\begin{theorem}
The coherence is upper bounded by 
$$\cC_f(\rho)\leq f(1/d)\ , $$
and
$$\hat{\cC}_f(\rho)\leq -f(d) \ . $$
The maximum value is reached for a maximally coherent pure state $\rho=\ket{\psi}\bra{\psi}$, with $\ket{\psi}=\frac{1}{\sqrt{d}}\sum_j\ket{j}$.
\end{theorem}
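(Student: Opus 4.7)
The plan is to observe that the upper bound follows at once by combining two results already established earlier in the paper: the upper bound on the $f$-entropies (Theorem \ref{thm:entropy-max}), applied to the state $\Delta(\rho)$, and the non-negativity of the $f$-entropies (applied to $\rho$). More concretely, I would first write
\begin{equation*}
\cC_f(\rho)=S_f(\Delta(\rho))-S_f(\rho),\qquad \hat{\cC}_f(\rho)=\hat{S}_f(\Delta(\rho))-\hat{S}_f(\rho),
\end{equation*}
and then invoke $S_f(\Delta(\rho))\leq f(1/d)$ and $\hat{S}_f(\Delta(\rho))\leq -f(d)$ from Theorem \ref{thm:entropy-max}, together with $S_f(\rho)\geq 0$ and $\hat{S}_f(\rho)\geq 0$ from the earlier positivity theorem. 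Subtracting a non-negative quantity from the respective entropy bounds yields $\cC_f(\rho)\leq f(1/d)$ and $\hat{\cC}_f(\rho)\leq -f(d)$ immediately.

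For the saturation claim I would specialize to $\rho=\ket{\psi}\bra{\psi}$ with $\ket{\psi}=\frac{1}{\sqrt{d}}\sum_j\ket{j}$. The direct computation $\bra{j}\psi\rangle\langle\psi\ket{j}=1/d$ for every $j$ shows that $\Delta(\rho)=I/d$, so Theorem \ref{thm:entropy-max} gives $S_f(\Delta(\rho))=f(1/d)$ and $\hat{S}_f(\Delta(\rho))=-f(d)$, attaining equality in the entropy bound. Since $\rho$ is pure, the theorem on non-negativity of $f$-entropy shows $S_f(\rho)=\hat{S}_f(\rho)=0$, so the second inequality is also tight. Substituting into the definitions of $\cC_f$ and $\hat{\cC}_f$ then gives exactly $f(1/d)$ and $-f(d)$ respectively, showing the bounds are sharp and realized by this state.

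There is no substantial obstacle: the proof is essentially a one-line combination of two earlier theorems and a direct verification at the candidate maximizer. The only point deserving a brief remark is that, for operator-monotone decreasing $f$ with $f(1)=0$, the quantities $f(1/d)$ and $-f(d)$ are indeed non-negative, so the bounds are consistent with the non-negativity of $\tilde{\cC}_f$ established previously; this is why no additional sign check is needed.
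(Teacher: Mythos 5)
Your proposal is correct and follows essentially the same route as the paper: apply the entropy upper bound of Theorem \ref{thm:entropy-max} to $\Delta(\rho)$, use non-negativity of $\tilde{S}_f(\rho)$, and verify saturation via $\Delta(\ket{\psi}\bra{\psi})=I/d$ together with the vanishing of the $f$-entropies on pure states. Your write-up is in fact slightly more explicit than the paper's, which leaves the non-negativity step implicit.
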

\begin{proof}
This follows from the upper bound on the $f$-entropy Theorem \ref{thm:entropy-max}, and the definition of coherence
$$\tilde{\cC}_f(\rho)=\tilde{S}_f(\Delta(\rho))-\tilde{S}_f(\rho)\ . $$
For a pure state the entropy is zero, $\tilde{S}_f(\ket{\psi}\bra{\psi})=0$. The dephasing operation applied to  the state $\ket{\psi}=\frac{1}{\sqrt{d}}\sum_j\ket{j}$ gives a maximally mixed state $I/d$. The theorem follows from the fact that the entropy is maximal on maximally mixed state.
\end{proof}

\subsection{Monotonicity}
\begin{theorem}
$\cC_S$ and, in particular, $\tilde{\cC}_f$ is monotone under GIO.
\end{theorem}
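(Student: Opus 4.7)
The plan is to reduce the monotonicity statement to two facts about a GIO map $\Lambda$: first, that $\Lambda$ is unital (and hence that the entropy $S$, respectively $\tilde{S}_f$, is non-decreasing under it by the assumption on $\cC_S$ and by Theorem \ref{thm:entropy-mono} for the $f$-entropies), and second, that dephasing absorbs $\Lambda$ in the sense that $\Delta\circ\Lambda=\Delta$ when applied to any state. Granting these two facts, the proof is one line:
\begin{align}
\cC_S(\Lambda(\rho))
&=S(\Delta(\Lambda(\rho)))-S(\Lambda(\rho))\\
&=S(\Delta(\rho))-S(\Lambda(\rho))\\
&\leq S(\Delta(\rho))-S(\rho)=\cC_S(\rho),
\end{align}
where the second equality uses $\Delta\circ\Lambda=\Delta$ and the inequality uses monotonicity of $S$ under unital CPTP maps.

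For the unitality claim, note that since $\Lambda$ is GIO and the maximally mixed state $I/d$ is incoherent, we have $\Lambda(I/d)=I/d$, hence $\Lambda(I)=I$. Alternatively, by the characterization recalled after Definition \ref{def:GIO}, each Kraus operator of $\Lambda$ has the form $K_n=\sum_j k_{n,j}\ket{j}\bra{j}$, so the trace-preserving condition $\sum_n K_n^*K_n=\idty$ becomes $\sum_n|k_{n,j}|^2=1$ for every $j$, which is precisely unitality for diagonal Kraus operators.

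For the identity $\Delta\circ\Lambda=\Delta$, I would use the same diagonal-Kraus structure: for any state $\rho$,
\begin{equation}
\bra{j}\Lambda(\rho)\ket{j}=\sum_n\bra{j}K_n\rho K_n^*\ket{j}=\sum_n|k_{n,j}|^2\,\bra{j}\rho\ket{j}=\bra{j}\rho\ket{j},
\end{equation}
so $\Lambda(\rho)$ has the same diagonal entries as $\rho$ in the $\cE$ basis, giving $\Delta(\Lambda(\rho))=\Delta(\rho)$. (Equivalently, since $\Delta(\rho)\in\cI$ is a fixed point of every GIO map, $\Lambda(\Delta(\rho))=\Delta(\rho)$; combining this with the diagonal-preservation observation above yields the same conclusion.) There is no real obstacle here — the only mild point to flag is that the argument needs the entropy to be monotone under unital CPTP maps, which is all that Theorem \ref{thm:entropy-mono} supplies for the $f$-entropies, and which is automatic under the hypothesis on $\cC_S$.
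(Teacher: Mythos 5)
Your proof is correct and follows essentially the same route as the paper: both reduce the claim to the identity $\Delta(\Lambda(\rho))=\Delta(\rho)$ together with monotonicity of the entropy under the unital CPTP map $\Lambda$. The only cosmetic difference is that you verify $\Delta\circ\Lambda=\Delta$ by a direct computation with the diagonal Kraus operators, whereas the paper invokes GIO $\subset$ SIO (commutation with dephasing) plus the fact that $\Lambda$ fixes the incoherent state $\Delta(\rho)$.
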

\begin{proof}
Any GIO map $\Lambda$ is also SIO, and, in particular, $\Lambda$ commutes with the dephasing operation. Therefore, $\Delta(\Lambda(\rho))=\Lambda(\Delta(\rho))=\Delta(\rho)$, the last equality is due to the fact that $\Delta(\rho)\in\cI$ and $\Lambda$ as GIO preserves incoherent states. Therefore,
\begin{align}
&\cC_S(\rho)-\cC_S(\Lambda(\rho))\\
&= S(\Delta(\rho))-S(\rho) -S(\Delta(\Lambda(\rho)))+S(\Lambda(\rho))\\
&=S(\Lambda(\rho))-S(\rho)\\
&\geq 0\label{eq:mono-mono}\ ,
\end{align}
since $\Lambda$ is a CPTP map and $S$ is non-increasing under CPTP maps. For $f$-coherences, the last inequality holds to the Theorem \ref{thm:entropy-mono} since a GIO map is unital.
\end{proof}

\begin{theorem} For GIO map $\Lambda$, the equality
$$\tilde{\cC}_f(\rho)=\tilde{\cC}_f(\Lambda(\rho))$$ happens if and only if any Kraus representation of $\Lambda(\rho)=\sum_j K_j\rho K_j^*$ mush have operators $K_j=\sum_n k_{jn}\ket{n}\bra{n}$ that satisfy: for any $n,m$ such that $\bra{n}\rho\ket{m}\neq 0$, it must be  that $$\left|\sum_{j}\overline{k_{jn}}k_{jm}\right|^2=1\ .$$
\end{theorem}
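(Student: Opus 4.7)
The plan is to reduce the equality in coherence to the equality case in the monotonicity of the $f$-divergence, invoke Petz's recovery theorem, and finally translate the resulting operator identity into the stated condition on Kraus coefficients.

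First I would mimic the monotonicity proof. Let $\sigma=I$ if working with $\hat{\cC}_f$, or $\sigma=I/d$ if working with $\cC_f$; both are incoherent, so $\Lambda(\sigma)=\sigma$ since $\Lambda$ is GIO, and the SIO property used in the monotonicity proof gives $\Delta\circ\Lambda=\Delta$ on every state. Substituting into the definition of $\tilde{\cC}_f$ and the two definitions of $\tilde{S}_f$,
\begin{equation*}
\tilde{\cC}_f(\rho)-\tilde{\cC}_f(\Lambda(\rho))=S_f(\rho\|\sigma)-S_f(\Lambda(\rho)\|\Lambda(\sigma)).
\end{equation*}
Thus the equality condition for coherence coincides with the equality case of the monotonicity of the $f$-divergence under $\Lambda$ at $(\rho,\sigma)$. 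By Petz's theorem (cited after \eqref{eq:Petz-map}), this holds if and only if $\cR_\sigma(\Lambda(\rho))=\rho$.

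Next, since $\sigma$ is a multiple of the identity and $\Lambda(\sigma)=\sigma$, the square-root factors in \eqref{eq:Petz-map} cancel and $\cR_\sigma=\Lambda^*$, as was already observed in the proof that $\tilde\cC_f(\rho)=0\iff \rho\in\cI$. Hence the equality condition becomes the operator identity $\Lambda^*(\Lambda(\rho))=\rho$. Because this identity depends only on $\Lambda$ and not on a choice of Kraus decomposition, the conclusion must automatically hold for every Kraus representation of $\Lambda$, as claimed in the statement.

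Finally, since $\Lambda$ is GIO, every Kraus operator is diagonal, $K_j=\sum_n k_{jn}\ket{n}\bra{n}$. Expanding $\Lambda^*(\Lambda(\rho))=\sum_{j,k}K_j^*K_k\rho K_k^*K_j$ in the basis $\cE$ and using that diagonal operators multiply component-wise, the factor $\sum_{j,k}\overline{k_{jn}}k_{kn}\overline{k_{km}}k_{jm}$ splits as $\big|\sum_j\overline{k_{jn}}k_{jm}\big|^2$, yielding
\begin{equation*}
\Lambda^*(\Lambda(\rho))=\sum_{n,m}\bra{n}\rho\ket{m}\,\Big|\sum_j\overline{k_{jn}}k_{jm}\Big|^2\,\ket{n}\bra{m}.
\end{equation*}
Matching coefficients with $\rho=\sum_{n,m}\bra{n}\rho\ket{m}\,\ket{n}\bra{m}$ forces $\big|\sum_j\overline{k_{jn}}k_{jm}\big|^2=1$ for every pair $(n,m)$ with $\bra{n}\rho\ket{m}\neq 0$; conversely, this condition clearly implies $\Lambda^*(\Lambda(\rho))=\rho$. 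Note that for $n=m$ the condition is automatic from the completeness relation $\sum_j|k_{jn}|^2=1$, so its content lies in the off-diagonal entries supported by $\rho$. I do not expect any essential obstacle: the main conceptual step is the collapse of the Petz map to $\Lambda^*$ via the $\Lambda$-invariance of $\sigma$, and the rest is a routine index manipulation in diagonal Kraus operators.
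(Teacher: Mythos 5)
Your proposal is correct and follows essentially the same route as the paper: reduce the coherence equality to the equality case of the data-processing inequality for $S_f(\cdot\|\sigma)$ with $\sigma$ a multiple of the identity, collapse the Petz recovery map to $\Lambda^*$ using unitality, and expand $\Lambda^*(\Lambda(\rho))=\rho$ in the diagonal Kraus operators to extract the condition $\bigl|\sum_j\overline{k_{jn}}k_{jm}\bigr|^2=1$ on the entries where $\bra{n}\rho\ket{m}\neq 0$. Your explicit remarks on the converse direction and on why the condition is representation-independent are welcome additions but do not change the argument.
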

\begin{proof} Similarly, to the positivity section,  equality in (\ref{eq:mono-mono}) happens if and only if there is a recovery map $\cR$ such that $\cR(\Lambda(\rho))=\rho$ and $\cR(I)=I$, \cite{HM11, HM17}. By (\ref{eq:Petz-map}), this map admits the following explicit form: denoting $\sigma=I$
$$\cR_\sigma(\omega)=\sigma^{1/2}\Lambda^*\left( \Lambda(\sigma)^{-1/2}\omega\Lambda(\sigma)^{-1/2}\right)\sigma^{1/2}\ , $$
where $\Lambda^*$ is a dual map of $\Lambda$. Since $\Lambda$ is a linear unital GIO map, we have
\begin{equation}
\cR_\sigma(\omega)= \Lambda^*(\omega)\ .
\end{equation}
Therefore, condition $\cR_\sigma(\Lambda(\rho))=\rho$ implies that
\begin{align}\label{eq:mono-equality}
\rho=\Lambda^*(\Lambda(\rho))\ .
\end{align}
Denote a Kraus representation of  $\Lambda$ as  $\Lambda(\rho)=\sum_jK_i\rho K_j^*$. From \cite{DS16}, since $\Lambda$ is GIO, any Kraus representation of $\Lambda$ has diagonal operators, i.e. each $K_j=\sum_n k_{jn}\ket{n}\bra{n}$ is diagonal in basis $\cE$. Since $\sum_j K_j^* K_j=I$, we have $\sum_j |k_{jn}|^2=1$ for every $n$. The dual map is $\Lambda^*(\rho)=\sum_j K^*_j\rho K_j$. Therefore, (\ref{eq:mono-equality}) becomes
$$\rho=\sum_{ji} K_j^*K_i \rho \left(  K_j^*K_i\right)^* \ .$$
Writing both sides in basis $\cE$ gives
\begin{align}
&\sum_{nm}\bra{n}\rho\ket{m}\ket{n}\bra{m}\\
&=\sum_{nm}\sum_{ij}\overline{k_{jn}}k_{in}k_{jm}\overline{k_{im}}\bra{n}\rho\ket{m}\ket{n}\bra{m}\\
&=\sum_{nm}\left|\sum_{j}\overline{k_{jn}}k_{jm}\right|^2\bra{n}\rho\ket{m}\ket{n}\bra{m}\ .
\end{align}
This implies that for every $n,m$ such that $\bra{n}\rho\ket{m}\neq 0$ we have
\begin{equation}\label{eq:sat}
 \left|\sum_{j}\overline{k_{jn}}k_{jm}\right|^2=1\ .
 \end{equation}
This clearly confirms that any incoherent state saturates monotonicity for GIO maps.

If $\rho$ is a coherent state, i.e. there exist $n,m$ such that $\bra{n}\rho\ket{m}\neq 0$, to saturate monotonicity the map $\Lambda$ should satisfy (\ref{eq:sat}). Note that by Cauchy-Schwarz inequality we have
$$ \left|\sum_{j}\overline{k_{jn}}k_{jm}\right|^2\leq \sum_{j}|{k_{jn}}|^2\sum_j|k_{jm}|^2=1\ . $$
The equality above happens if and only if there exists a scalar $\alpha_{nm}\in \mathbb{C}$ such that for any $j$: $k_{jn}=\alpha_{nm} k_{jm}$.
\end{proof}

Applying the strengthening of monotonicity inequality for $f$-divergences \cite{CV18}, we obtain a strengthening on the monotonicity inequality for $f$-coherence.
 
 \begin{theorem} Let $\Lambda$ be any GIO map. Let $f$ be an operator monotone 
decreasing function, and $T>0$.  Suppose for some constant $c> 0$, there is a constant $C> 0$ so that
${\rm d}t\leq  CT^{2c}{\rm d}\mu_f(t)$ for 
$t\in[T^{-1}, T]$.  Then there is an explicitly computable  constant 
$K_f(\rho)$ depending only on the smallest non-zero eigenvalue of $\rho$, $C$ and $c$, such that, 
\begin{equation}
\cC_f(\rho)-C_f(\Lambda(\rho))\geq K_f(\rho) \|\rho-\Lambda^*(\Lambda(\rho))\|_1^{4(1+c)}\ .
\end{equation}
\end{theorem}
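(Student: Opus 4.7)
The plan is to reduce the inequality to the quantitative strengthening of monotonicity for the quasi-relative entropy $S_f(\cdot\|I/d)$, and then identify the associated Petz recovery map. Nothing conceptually new happens beyond the monotonicity proof just given: GIO-invariance cancels the dephased contributions, \cite{CV18} handles the remaining gap, and the Petz recovery map with respect to a multiple of the identity collapses to the Heisenberg dual $\Lambda^*$.

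First, I would recycle the algebra from the monotonicity theorem. Since $\Lambda$ is GIO, it is SIO and hence commutes with $\Delta$, and it fixes every incoherent state; so $\Delta(\Lambda(\rho)) = \Lambda(\Delta(\rho)) = \Delta(\rho)$. Then definition (\ref{eq:def-coh-f}) gives
\begin{equation}
\cC_f(\rho) - \cC_f(\Lambda(\rho)) = \bigl(S_f(\Delta(\rho)) - S_f(\rho)\bigr) - \bigl(S_f(\Delta(\rho)) - S_f(\Lambda(\rho))\bigr) = S_f(\Lambda(\rho)) - S_f(\rho),
\end{equation}
and unpacking the one-argument $f$-entropy via (\ref{eq:f-entropy}) (the $f(1/d)$ terms cancel) turns this into
\begin{equation}
\cC_f(\rho) - \cC_f(\Lambda(\rho)) = S_f(\rho\|I/d) - S_f(\Lambda(\rho)\|I/d).
\end{equation}
Because $\Lambda$ is unital, $\Lambda(I/d) = I/d$, so the right-hand side is exactly the monotonicity gap $S_f(\rho\|\sigma) - S_f(\Lambda(\rho)\|\Lambda(\sigma))$ with $\sigma = I/d$.

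Second, I would feed this gap into the quantitative strengthening of monotonicity from \cite{CV18}. Under the stated hypothesis $\rd t \leq CT^{2c}\rd\mu_f(t)$ on $[T^{-1},T]$, that result produces an explicitly computable constant $K_f(\rho)$, depending only on the smallest nonzero eigenvalue of $\rho$ and on $C,c$, such that
\begin{equation}
S_f(\rho\|\sigma) - S_f(\Lambda(\rho)\|\Lambda(\sigma)) \geq K_f(\rho)\,\|\rho - \cR_\sigma(\Lambda(\rho))\|_1^{4(1+c)},
\end{equation}
where $\cR_\sigma$ is the Petz recovery map (\ref{eq:Petz-map}). It remains to simplify $\cR_{I/d}$: for $\sigma = I/d$, unitality forces $\Lambda(\sigma) = \sigma$, and the scalar prefactors $d^{\pm 1/2}$ in (\ref{eq:Petz-map}) cancel, yielding $\cR_{I/d}(\omega) = \Lambda^*(\omega)$. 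Substituting $\omega = \Lambda(\rho)$ produces exactly the claimed bound.

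The main obstacle I anticipate is not the reduction, which is essentially bookkeeping, but the careful invocation of \cite{CV18}: one must verify that the constant $K_f(\rho)$ provided there indeed depends only on $\rho$ (through its smallest nonzero eigenvalue) and on $C, c$, and not implicitly on $\Lambda$ or on spectral data of $\Lambda(\rho)$ or $\Lambda^*\Lambda(\rho)$. Once that dependence is confirmed to be clean in the regime where the second argument is a scalar multiple of $I$, the proof reduces to the three-step substitution above.
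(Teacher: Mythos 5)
Your proposal is correct and follows exactly the route the paper intends (the paper states this theorem without a written proof, as a direct corollary of its monotonicity argument combined with the quantitative data-processing inequality of \cite{CV18}): the reduction $\cC_f(\rho)-\cC_f(\Lambda(\rho))=S_f(\rho\|I/d)-S_f(\Lambda(\rho)\|\Lambda(I/d))$ via $\Delta\circ\Lambda=\Delta$ and unitality is the same computation as in the paper's monotonicity theorem, and the identification $\cR_{I/d}=\Lambda^*$ is the same scalar cancellation the paper performs in its equality-condition theorem. Your closing caveat about verifying the precise dependence of $K_f(\rho)$ in \cite{CV18} is well placed, but it does not change the structure of the argument.
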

\begin{proof}
Any GIO map $\Lambda$ commutes with the dephasing operation, therefore, 
$\Delta(\Lambda(\rho))=\Lambda(\Delta(\rho))=\Delta(\rho)$, the last equality is due to the fact that $\Delta(\rho)\in\cI$ and $\Lambda$ as GIO preserves incoherent states. Using this, we have
\begin{align}
&\cC_f(\rho)-\cC_f(\Lambda(\rho))\\
&= S_f(\Delta(\rho))-S_f(\rho) -S_f(\Delta(\Lambda(\rho)))+S_f(\Lambda(\rho))\\
&=S_f(\Lambda(\rho))-S_f(\rho)\\
&=S_f(\rho\|I/d)-S_f(\Lambda(\rho)\|I/d)\\
&=S_f(\rho\|\sigma)-S_f(\Lambda(\rho)\|\Lambda(\sigma))\ ,
\end{align}
where $\sigma=I/d$, and since $\sigma\in \cI$ and $\Lambda$ is GIO, $\Lambda(\sigma)=\sigma$. 

Applying result in \cite{CV18}, which estimates the error in the monotonicity relation for $f$-divergence, leads to the desired abound.

\end{proof}

The next theorem shows that $\tilde{\cC}_f$ is not in general monotone under SIO operations.

\begin{theorem}
If $\tilde{\cC}_f$ is monotone under all SIO, then for all states $\rho$ and $\ket{0}\in\cE$, we have
$$\tilde{\cC}_f(\rho\otimes I/d)=\tilde{\cC}_f(\rho\otimes  \ket{0}\bra{0})\ . $$
In other words, if ${\cC}_f$ is monotone under SIO, then for all states with eigenvalues $\{\lambda_j\}$ and diagonal elements $\{\chi_j\}$ in the basis $\cE$, the following holds
\begin{align}
&\sum_j \lambda_j f\left(\frac{1}{d\lambda_j}\right)-\sum_j \chi_j f\left(\frac{1}{d\chi_j}\right)\nonumber\\
&= \sum_j \lambda_j f\left(\frac{1}{d^2\lambda_j}\right)-\sum_j \chi_j f\left(\frac{1}{d^2\chi_j}\right)\ . \label{eq:sat-mono}
\end{align}
And, if  $\hat{\cC}_f$ is monotone under SIO, then for all states with eigenvalues $\{\lambda_j\}$ and diagonal elements $\{\chi_j\}$ in the basis $\cE$, the following holds
\begin{align}
&\sum_j \lambda_j f\left(\frac{1}{\lambda_j}\right)-\sum_j \chi_j f\left(\frac{1}{\chi_j}\right)\\
&= \sum_j \lambda_j f\left(\frac{d}{\lambda_j}\right)-\sum_j \chi_j f\left(\frac{d}{\chi_j}\right)\ . \label{eq:sat-mono-hat}
\end{align}
\end{theorem}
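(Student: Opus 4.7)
The plan is to exhibit two explicit SIO channels on the bipartite Hilbert space $\cH\otimes\cH$: one carrying $\rho\otimes I/d$ to $\rho\otimes\ket{0}\bra{0}$, and one going the other way. The assumed SIO monotonicity of $\tilde{\cC}_f$ then sandwiches the two values of $\tilde{\cC}_f$ between each other, forcing equality; the concrete identities (\ref{eq:sat-mono}) and (\ref{eq:sat-mono-hat}) then follow by unpacking this equality via the spectral formulas (\ref{eq:c_f-l-c}) and (\ref{eq:c_f-l-c-hat}).

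For the first channel I would take $\Lambda_1$ with Kraus operators $K_k := I\otimes\ket{0}\bra{k}$ for $k=0,\dots,d-1$. A direct computation gives $\sum_k K_k^* K_k = I\otimes\sum_k\ket{k}\bra{k} = I$, and each $K_k$, written in the product basis $\{\ket{i,j}\}$, sends $\ket{i,k}\mapsto\ket{i,0}$ and annihilates all other basis vectors, so it has at most one nonzero entry per row and per column; this is the standard matrix characterization of SIO Kraus operators, and one easily checks the commutation identity $K_k\Delta(\cdot)K_k^* = \Delta(K_k\cdot K_k^*)$ of Definition \ref{def:SIO}. A one-line calculation then gives $\Lambda_1(\rho\otimes I/d) = \rho\otimes\ket{0}\bra{0}$. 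For the reverse direction, take $\Lambda_2$ with Kraus operators $L_{k,j} := \frac{1}{\sqrt{d}}\,I\otimes\ket{k}\bra{j}$ for $j,k=0,\dots,d-1$; the analogous checks give $\sum_{j,k} L_{k,j}^* L_{k,j} = I$, the SIO property of each $L_{k,j}$, and $\Lambda_2(\rho\otimes\ket{0}\bra{0}) = \rho\otimes I/d$. The same constructions work with any $\ket{\ell}\in\cE$ in place of $\ket{0}$.

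Applying the assumed SIO monotonicity of $\tilde{\cC}_f$ to $\Lambda_1$ and $\Lambda_2$ in turn yields
\begin{equation*}
\tilde{\cC}_f(\rho\otimes I/d)\geq\tilde{\cC}_f(\rho\otimes\ket{0}\bra{0})\geq\tilde{\cC}_f(\rho\otimes I/d),
\end{equation*}
so the two quantities are equal. To derive the identities, I would compute both sides using (\ref{eq:c_f-l-c})--(\ref{eq:c_f-l-c-hat}) on the $d^2$-dimensional space $\cH\otimes\cH$. The state $\rho\otimes\ket{0}\bra{0}$ has eigenvalues $\{\lambda_j\}$ and diagonal entries $\{\chi_j\}$ in $\cE\otimes\cE$, padded with zeros that drop out via the convention $0\cdot(\pm\infty)=0$; this produces the right-hand sides of (\ref{eq:sat-mono}) and (\ref{eq:sat-mono-hat}). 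The state $\rho\otimes I/d$ has eigenvalues $\lambda_j/d$ and diagonal entries $\chi_j/d$, each with multiplicity $d$; summing over the ancilla index absorbs the $1/d$ in front, while the arguments of $f$ become $1/(d\lambda_j)$, $1/(d\chi_j)$ for $\cC_f$ and $d/\lambda_j$, $d/\chi_j$ for $\hat{\cC}_f$, producing the left-hand sides.

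The main substantive step is the idea of using the two SIO channels $\Lambda_1,\Lambda_2$ to pin $\tilde{\cC}_f$ between the two states; everything after is mechanical spectral bookkeeping. The only place where some care is needed is verifying that the chosen Kraus operators actually satisfy the commutation identity in Definition \ref{def:SIO} (rather than merely preserving incoherent states, as would be required for GIO), but this is immediate from the ``one nonzero entry per row and per column'' description of SIO Kraus operators, which both families $\{K_k\}$ and $\{L_{k,j}\}$ manifestly satisfy.
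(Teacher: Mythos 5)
Your proposal is correct and follows essentially the same route as the paper: the paper's two SIO-but-not-GIO channels are exactly your $\Lambda_2$ (the depolarizing channel $\frac{1}{\sqrt d}\,I\otimes\ket{k}\bra{j}$ on the ancilla) and your $\Lambda_1$ (the erasure channel $I\otimes\ket{0}\bra{k}$), used in the same sandwich argument followed by the same spectral bookkeeping via (\ref{eq:c_f-l-c}) and (\ref{eq:c_f-l-c-hat}). The only cosmetic difference is that the paper verifies the commutation identity of Definition \ref{def:SIO} by direct computation rather than citing the one-nonzero-entry-per-row-and-column characterization.
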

\begin{proof}
First, note that from (\ref{eq:c_f-l-c}) we have: for $\ket{0}\in\cE$, 
\begin{equation}\label{eq:C_f-0}
\cC_f(\rho\otimes \ket{0}\bra{0})=\sum_j \lambda_j f\left(\frac{1}{d^2\lambda_j}\right)-\sum_j \chi_j f\left(\frac{1}{d^2\chi_j}\right)\ , 
\end{equation}
and
\begin{equation}\label{eq:C_f-I}
\cC_f(\rho\otimes I/d)=\cC_f(\rho)=\sum_j \lambda_j f\left(\frac{1}{d\lambda_j}\right)-\sum_j \chi_j f\left(\frac{1}{d\chi_j}\right)\ . 
\end{equation}
Moreover,
\begin{equation}\label{eq:C_f-0-hat}
\hat{\cC}_f(\rho\otimes \ket{0}\bra{0})=\hat{\cC}_f(\rho)=\sum_j \lambda_j f\left(\frac{1}{\lambda_j}\right)-\sum_j \chi_j f\left(\frac{1}{\chi_j}\right)\ , 
\end{equation}
and
\begin{equation}\label{eq:C_f-I-hat}
\hat{\cC}_f(\rho\otimes I/d)=\sum_j \lambda_j f\left(\frac{d}{\lambda_j}\right)-\sum_j \chi_j f\left(\frac{d}{\chi_j}\right)\ . 
\end{equation}

Let us consider two examples of SIO$\setminus$GIO maps.\\
1. Let $\Phi(\rho)=I/d$ be the depolarizing quantum channel, which in Kraus form can be written as 
$$\Phi(\rho)=I/d=\sum_{ij=0}^{d-1}K_{ij}\rho K_{ij}^*\ ,$$
 {where} $K_{ij}=\frac{1}{\sqrt{d}}\ket{i}\bra{j}.$

Define an operation on a tensor product Hilbert space as follows
\begin{equation}
\Lambda(\omega)=\sum_{ij}(I\otimes K_{ij})\omega (I\otimes K_{ij})^*\ .
\end{equation}
Clearly, $\Lambda$ is not a GIO, since its Kraus operators are not diagonal in $\cE\otimes\cE$ basis, or since
\begin{align}
&\Lambda(\rho\otimes\ket{0}\bra{0})=\rho\otimes\Phi(\ket{0}\bra{0})=\rho\otimes I/d\\
&\neq \rho\otimes \ket{0}\bra{0}\in\cE\otimes\cE\ .
\end{align}
But $\Lambda$ is SIO, since for any $n,m$
\begin{align}
&(I\otimes K_{nm})(\Delta(\omega)(I\otimes K_{nm}^*)\\
&=\frac{1}{d}(I\otimes\ket{n}\bra{m})\left(\sum_{ij}\bra{ij}\omega\ket{ij}\ket{ij}\bra{ij}\right)(I\otimes\ket{m}\bra{n})\\
&=\frac{1}{d}\sum_{ij}\bra{ij}\omega\ket{ij}\ket{i}\bra{i}\otimes\ket{n}\bra{m}\ket{j}\bra{j}\ket{m}\bra{n}\\
&=\frac{1}{d}\sum_{i}\bra{im}\omega\ket{im}\ket{in}\bra{in}\ ,
\end{align}
and 
\begin{align}
&\Delta((I\otimes K_{nm})\omega(I\otimes K_{nm}^*))\\
&=\frac{1}{d}\sum_{ij}\bra{ij}(I\otimes\ket{n}\bra{m})\omega(I\otimes\ket{m}\bra{n})\ket{ij}\ket{ij}\bra{ij}\\
&=\frac{1}{d}\sum_{i}\bra{im}\omega\ket{im}\ket{in}\bra{in}\\
\end{align}
Therefore, $\Lambda$ is a SIO map.

For either $\cC_f$ or $\hat{\cC}_f$, consider
\begin{align}
\tilde{\cC}_f(\Lambda(\rho\otimes \ket{0}\bra{0}))&=\tilde{\cC}_f(\rho\otimes\Phi(\ket{0}\bra{0})))\\
&=\tilde{\cC}_f(\rho\otimes I/d)\label{eq:mono-1}\ .
\end{align}

2. Consider another example, let $\Psi(\rho)=\ket{0}\bra{0}$ be the erasure channel, which in Kraus form can be written as 
$$\Psi(\rho)=\ket{0}\bra{0}=\sum_{j=0}^{d-1}K_{j}\rho K_{j}^*,\qquad \text{where}\ K_{j}=\ket{0}\bra{j}\ .$$
Define an operation on a tensor product Hilbert space as follows
\begin{equation}
M(\omega)=\sum_{j}(I\otimes K_{j})\omega (I\otimes K_{j})^*\ .
\end{equation}
Clearly, $M$ is not a GIO, since its Kraus operators are not diagonal in $\cE\otimes\cE$ basis, or since
\begin{align}
&M(\rho\otimes I/d)=\rho\otimes \Psi(I/d)=\rho\otimes  \ket{0}\bra{0}\\
&\neq \rho\otimes I/d\in\cE\otimes\cE\ .
\end{align}
But $M$ is SIO, since  for any $n$,
\begin{align}
&(I\otimes K_{n})\Delta(\omega)(I\otimes K_{n}^*)\\
&=\sum_{ij}\bra{ij}\omega\ket{ij} (I\otimes\ket{0}\bra{n})\ket{ij}\bra{ij}(I\otimes\ket{n}\bra{0})\\
&=\sum_{i}\bra{in}\omega\ket{in} \ket{i}\bra{i}\otimes \ket{0}\bra{0}\ .
\end{align}
and 
\begin{align}
&\Delta\left((I\otimes K_{n})\omega(I\otimes K_{n}^*)\right)\\
&=\sum_{ij}\bra{ij}(I\otimes \ket{0}\bra{n})\omega(I\otimes \ket{n}\bra{0})\ket{ij}\ket{ij}\bra{ij}\\
&=\sum_{i}\bra{in}\omega\ket{in}\ket{i0}\bra{i0}\ .
\end{align}
Therefore, $M$ is an SIO map.

For either $\cC_f$ or $\hat{\cC_f}$, consider
\begin{align}
\tilde{\cC}_f(M(\rho\otimes I/d))&=\tilde{\cC}_f(\rho\otimes\Psi(I/d)))\\
&=\tilde{\cC}_f(\rho\otimes \ket{0}\bra{0})\label{eq:mono-2}\ .
\end{align}

Now, compare (\ref{eq:mono-1}) and (\ref{eq:mono-2}). In order for monotonicity of $f$-coherence  to hold under all SIO, there must be an equality
$$\tilde{\cC}_f(\rho\otimes I/d)=\tilde{\cC}_f(\rho\otimes  \ket{0}\bra{0})\ . $$
Invoking (\ref{eq:C_f-0}-\ref{eq:C_f-I-hat}) we have the result stated in the theorem.
\end{proof}

Note that both (\ref{eq:sat-mono}) and (\ref{eq:sat-mono-hat}) hold for the logarithmic function $f(x)=-\log(x)$, but fail for the power function  $f(x)=\frac{1}{1-\alpha}(1-x^{1-\alpha})$. This is in line with the fact that the relative entropy of coherence is monotone under SIO, and it shows that Tsallis coherence fails monotonicity for SIO.

\subsection{Strong monotonicity}

\begin{theorem}
$f$-coherences $\tilde{\cC}_f$ saturate strong monotonicity for convex mixtures of diagonal unitaries. Therefore, $\tilde{\cC}_f$ saturates strong monotonicity under GIO in two- and three-dimensions.
\end{theorem}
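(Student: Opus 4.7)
My plan is to reduce the statement to a direct computation showing that convex mixtures of diagonal unitaries saturate strong monotonicity for $\tilde{\cC}_f$, and then to invoke the known structural fact that in dimensions two and three every GIO map is such a mixture.

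First I would verify saturation for any $\Lambda(\rho)=\sum_n p_n U_n \rho U_n^\ast$ with each $U_n=\sum_k e^{i\theta_k^{(n)}}\ket{k}\bra{k}$ diagonal in $\cE$. Taking the natural Kraus operators $K_n=\sqrt{p_n}\,U_n$, the post-measurement state and probability are $\rho_n=U_n\rho U_n^\ast$ and $q_n=p_n$. A diagonal unitary preserves both the eigenvalues of $\rho$ (being unitary) and the diagonal entries $\bra{k}\rho\ket{k}$ in $\cE$, since the phases cancel pointwise. Because $\tilde{\cC}_f$ depends on its argument only through eigenvalues and diagonal entries, by the explicit formulas (\ref{eq:c_f-l-c}) and (\ref{eq:c_f-l-c-hat}), one gets $\tilde{\cC}_f(\rho_n)=\tilde{\cC}_f(\rho)$ for every $n$, so
$$\sum_n q_n\,\tilde{\cC}_f(\rho_n)=\tilde{\cC}_f(\rho),$$
i.e.\ strong monotonicity holds with equality.

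Next I would invoke the structural description of GIO in low dimensions. A GIO channel with diagonal Kraus operators $K_j=\sum_n k_{jn}\ket{n}\bra{n}$ acts on matrix elements as entrywise (Schur) multiplication by the correlation matrix $C_{nm}=\sum_j k_{jn}\overline{k_{jm}}$, which is positive semidefinite with unit diagonal; a diagonal unitary channel is precisely the case of a unit-modulus phase matrix $e^{i(\theta_n-\theta_m)}$. In dimensions $d=2$ and $d=3$, every such correlation matrix is a convex combination of unit-modulus phase matrices — a Birkhoff-type fact that can be extracted from the analysis in \cite{DS16} — so every GIO in these dimensions is a convex mixture of diagonal unitaries. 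Combined with the first step, saturation of strong monotonicity under GIO in dimensions two and three follows immediately.

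The main obstacle is this second step. The $d=2$ case is essentially a one-parameter statement about the closed unit disk and is elementary; the $d=3$ case relies on the particular geometry of $3\times 3$ positive semidefinite matrices with unit diagonal, and is precisely where the dimension restriction becomes sharp — for $d\ge 4$ the convex hull of phase matrices is a proper subset of all correlation matrices, so a direct extension of the argument breaks down and an additional idea would be required to pursue the general case.
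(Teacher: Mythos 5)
Your proposal is correct and follows essentially the same route as the paper: verify equality term-by-term for the Kraus decomposition $K_n=\sqrt{p_n}\,U_n$ of a convex mixture of diagonal unitaries (the paper does this via unitary invariance of $S_f(\cdot\|\sigma)$ and $\Delta(U_j\rho U_j^*)=\Delta(\rho)$, which is the same observation as your eigenvalue/diagonal-entry invariance), and then cite the structural result of \cite{DS16} that every GIO in dimensions two and three is such a mixture. Your added discussion of the correlation-matrix picture and why the argument is sharp at $d\ge 4$ is a nice gloss but does not change the argument.
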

\begin{proof}
Consider an example of GIO, which is a probabilistic mixture of diagonal unitaries: for some $\alpha_j>0$, s.t. $\sum_j \alpha_j=1$, define
$$\Lambda(\rho)=\sum_j \alpha_j U_j\rho U^*_j\ , $$
where for some $\rho_{jn}$ the unitaries $U_j$ are diagonal in $\cE$, i.e.
$$U_j=\sum_n e^{i\phi_{jn}}\ket{n}\bra{n}\ . $$

In \cite{DS16} it has been shown that all GIO are of such form for dimensions two and three, but it is no longer the case for higher dimensions. 

Note that for $\sigma=I$ or $\sigma=I/d$ and for all unitaries $U$, we have
\begin{equation}\label{eq:s-f-unitary}
S_f(U\rho U^*\|\sigma)=S_f(\rho\|\sigma)\ . 
\end{equation}
Taking $U_j$ diagonal in $\cE$ above, it follows that
$$\Delta(U_j\rho U_j^*)=\Delta(\rho)\ . $$
Therefore, $\tilde{\cC}_f$ saturates the strong monotonicity under convex mixtures of diagonal unitaries:
\begin{align}
&\sum_j\alpha_j\tilde{\cC}_f(U_j \rho U_j^*)\\
&=\sum_j\alpha_j \left[S_f(U_j\rho U_j^*\|\sigma)-S_f(\Delta(U\rho U^*)\|\sigma)\right]\\
&=\sum_j\alpha_j\left[S_f(\rho\|\sigma)-S_f(\Delta(\rho)\|\sigma)\right]\\
&=\tilde{\cC}_f(\rho )\ .
\end{align}
\end{proof}

\begin{remark}
Expanding the set of operations to include all unitaries (not necessarily diagonal in $\cE$), forces $\tilde{\cC}_f$ to be invariant under all unitaries if it is monotone under them. This results from the following observation: if $\tilde{\cC}_f$ is monotone under all unitaries $U$ and all states $\rho$, then, since (\ref{eq:s-f-unitary}) holds, it must be that
$$S_f(\Delta(U\rho U^*)\|\sigma)\geq S_f(\Delta(\rho)\|\sigma)\ . $$
But taking a unitary $V=U^*$ and an initial state $\omega=U\rho U^*$ above, results in the opposite inequality:
\begin{align}
&S_f(\Delta(V\omega V^*)\|\sigma)=S_f(\Delta(\rho)\|\sigma)\\
&\geq S_f(\Delta(U\rho U^*)\|\sigma)=S_f(\Delta(\omega)\|\sigma)\ . 
\end{align}
Therefore, the above inequality must be equality, which makes $\tilde{\cC}_f$ invariant under unitaries.
\end{remark}

\begin{theorem}
For any pure state $\rho$, the $f$-coherences are strongly monotone under GIO maps in any finite dimension.
\end{theorem}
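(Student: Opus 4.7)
The plan is to reduce strong monotonicity on pure states to a coordinate-wise application of Jensen's inequality for the perspective function $\tilde{f}(x) = x f(1/x)$.

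First, I would set up the outcomes explicitly. Fix a pure state $\rho = \ket{\psi}\bra{\psi}$ with $\ket{\psi} = \sum_n c_n \ket{n}$ in the basis $\cE$, and a GIO map $\Lambda$. By the diagonal-Kraus characterization of GIO recalled after Definition \ref{def:GIO}, any Kraus representation of $\Lambda$ has operators $K_j = \sum_n k_{jn} \ket{n}\bra{n}$ with $\sum_j |k_{jn}|^2 = 1$ for every $n$. Then $K_j \ket{\psi} = \sum_n k_{jn} c_n \ket{n}$ remains a (sub-normalized) vector, so each post-measurement state $\rho_j = K_j\rho K_j^*/p_j$ is pure, with $p_j = \sum_n |k_{jn}|^2 |c_n|^2$, and the diagonal entries of $\Delta(\rho_j)$ are $|k_{jn}|^2|c_n|^2/p_j$.

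Second, using uniqueness for pure states (Section 5.1), $\tilde{\cC}_f(\rho)=\tilde{S}_f(\Delta(\rho))$ and $\tilde{\cC}_f(\rho_j)=\tilde{S}_f(\Delta(\rho_j))$. Since the eigenvalues of $\Delta(\rho)$ are $\{|c_n|^2\}$ and those of $\Delta(\rho_j)$ are $\{|k_{jn}|^2|c_n|^2/p_j\}$, formulas \eqref{eq:entropy-eigen}--\eqref{eq:entropy-eigen-hat} together with the identity $\lambda f(1/\lambda) = \tilde{f}(\lambda)$ reduce the strong monotonicity inequality $\tilde{\cC}_f(\rho)\geq\sum_j p_j \tilde{\cC}_f(\rho_j)$ to
$$\sum_j p_j \sum_n \tilde{f}\!\left(\frac{a\,|k_{jn}|^2 |c_n|^2}{p_j}\right) \;\geq\; \sum_n \tilde{f}\bigl(a\,|c_n|^2\bigr),$$
with $a=1$ for $\hat{\cC}_f$ (working from \eqref{eq:entropy-eigen-hat}) and $a=d$ for $\cC_f$ (working from \eqref{eq:entropy-eigen}, where the constant term $f(1/d)$ cancels between the two sides).

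Third, I would apply Jensen's inequality for each fixed $n$ separately. Because $f$ is operator convex, in particular convex, on $(0,\infty)$, its perspective $\tilde{f}$ is convex as well (as recalled in the paper after Assumption 2.1). With probability weights $\{p_j\}$ and points $\{a|k_{jn}|^2|c_n|^2/p_j\}$, convexity gives
$$\sum_j p_j\, \tilde{f}\!\left(\frac{a\,|k_{jn}|^2|c_n|^2}{p_j}\right) \;\geq\; \tilde{f}\!\left(\sum_j a\,|k_{jn}|^2|c_n|^2\right) \;=\; \tilde{f}\bigl(a\,|c_n|^2\bigr),$$
where the last equality uses $\sum_j |k_{jn}|^2 = 1$. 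Summing over $n$ yields the claim. The substantive step is really the first reduction: purity collapses both $\rho$ and $\rho_j$ onto their diagonal spectra, after which the inequality is a one-line Jensen argument. The only technical care needed is to treat vanishing $c_n$ or $k_{jn}$ by the paper's convention $0\cdot(\pm\infty):=0$, which preserves all identities above.
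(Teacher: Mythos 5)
Your proposal is correct and follows essentially the same route as the paper: purity collapses $\tilde{\cC}_f(\rho)$ and $\tilde{\cC}_f(\rho_j)$ to entropies of the dephased diagonal spectra, the diagonal Kraus structure of GIO gives $\Delta(\rho_j)=K_j\Delta(\rho)K_j^*/p_j$, and the inequality reduces to one Jensen application per basis index $n$. The only cosmetic difference is that you apply convexity to the transpose $\tilde f$ with weights $\{p_j\}$, whereas the paper applies convexity to $f$ itself with weights $\{|k_{jn}|^2\}$; the two are the same inequality written in transposed form.
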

\begin{proof}
Let us denote $\sigma=I$ or $\sigma=I/d$ depending on the $f$-coherence we are considering. For a GIO map $\Lambda$ with Kraus operators $K_j$, denote 
$$p_j=\Tr K_j\rho K_j^*, \ \rho_j=\frac{1}{p_j} K_j\rho K_j^*\ . $$
For a pure state $\rho$, states $\rho_j$ are also pure. Therefore,
\begin{align}
&\tilde{\cC}_f(\rho)-\sum_jp_j\tilde{\cC}_f(\rho_j)\\
&=\sum_j p_j {S}_f(\Delta(\rho_j)\|\sigma)-{S}_f(\Delta(\rho)\|\sigma)\ .
\end{align}
Since any GIO map is an SIO map as well, it follows that
$$\Delta(\rho_j)=\frac{1}{p_j}K_j\Delta(\rho) K_j^*\ . $$

Dephased state $\Delta(\rho)$ is diagonal in $\cE$ basis with eigenvalues $\chi_j$, i.e. $\Delta(\rho)=\sum_j \chi_j\ket{j}\bra{j}$. The $f$-divergence is
$${S}_f(\Delta(\rho)\| I)=\sum_n\chi_nf\left( \frac{1}{\chi_n}\right)\ . $$

Kraus operators of GIO map are diagonal is $\cE$ basis, $K_j=\sum_n k_{jn}\ket{n}\bra{n}$, with $\sum_j|k_{jn}|^2=1$ for all $j$. Then 
$$K_j\Delta(\rho) K_j^*=\sum_n\chi_n|k_{jn}|^2\ket{n}\bra{n}\ . $$
And
$$\sum_j p_j {S}_f(\Delta(\rho_j)\|I)=\sum_{jn}\chi_n |k_{jn}|^2 f\left(\frac{p_j}{\chi_n|k_{jn}|^2} \right)\ .
$$
Since $f$ is convex, we have for every $n$:
\begin{align}
\sum_{j}|k_{jn}|^2 f\left(\frac{p_j}{\chi_n|k_{jn}|^2} \right)&\geq f\left(\sum_j\frac{p_j}{\chi_n}  \right)\\
&=f\left( \frac{1}{\chi_n}\right)\ .
\end{align}
Similarly, 
$$
{S}_f(\Delta(\rho)\| I/d)=\sum_n\chi_nf\left( \frac{1}{d\chi_n}\right)\ ,
$$
and 
$$\sum_j p_j {S}_f(\Delta(\rho_j)\|I/d)=\sum_{jn}\chi_n |k_{jn}|^2 f\left(\frac{p_j}{d\chi_n|k_{jn}|^2} \right)\ .
$$
Because $f$ is convex, for any $n$:
\begin{align}
\sum_{j}|k_{jn}|^2 f\left(\frac{p_j}{d\chi_n|k_{jn}|^2} \right)&\geq f\left(\sum_j\frac{p_j}{d\chi_n}  \right)\\
&=f\left( \frac{1}{d\chi_n}\right)\ .
\end{align}
And thus, 
$\sum_j p_j {S}_f(\Delta(\rho_j)\|\sigma)\geq {S}_f(\Delta(\rho)\|\sigma). $
Which implies that for any pure state $\rho$, the $f$-coherence is strongly monotone under GIO:
$$\tilde{\cC}_f(\rho)\geq\sum_jp_j\tilde{\cC}_f(\rho_j)\ . $$

\end{proof}

Data sharing not applicable to this article as no datasets were generated or analysed during the current study.

\vspace{0.3in}
\textbf{Acknowledgments.}  A. V. is supported by NSF grants DMS-1812734 and DMS-2105583.

\end{multicols}
\end{document}